\theoremstyle{plain}
\newtheorem{thm}{\textbf{Theorem}}[]
\xpatchcmd{\@thm}{\thm@headpunct{.}}{\thm@headpunct{}}{}{}
\renewenvironment{proof}[1][\proofname]{\par
\pushQED{\qed}%
\normalfont 
\trivlist
\item\relax
{\itshape
#1\@addpunct{:}}\hspace\labelsep\ignorespaces
}{%
\popQED\endtrivlist\@endpefalse
}
\renewenvironment{proof}[1][\proofname]{\par
\pushQED{\qed}%
\normalfont 
\trivlist
\item\relax
{\itshape
#1\@addpunct{:}}\hspace\labelsep\ignorespaces
}{%
\popQED\endtrivlist\@endpefalse
}
\tikzstyle{overbrace text style}=[above, pos=.5, yshift=3mm]
\tikzstyle{overbrace style}=[decorate,decoration={brace,raise=2mm,amplitude=3pt}]
\tikzstyle{underbrace style}=[decorate,decoration={brace,raise=2mm,amplitude=3pt,mirror},color=gray]
\tikzstyle{underbrace text style}=[font=\tiny, below, pos=.5, yshift=-3mm]
\let\svfootnoterule\footnoterule
\renewcommand\footnoterule{\vfill\svfootnoterule}
\newtheorem{lemma}{Lemma}
\newtheorem{remark}{Remark}
\newtheorem{corr}{Corollary}
\newtheorem*{prop}{Proposition}
\newcommand*\pFq[6][8]{%
  \begingroup % only local assignments
  \pFqmuskip=#1mu\relax
  % make the comma math active
  \mathcode`\,=\string"8000
  % and define it to be \pFqcomma
  \begingroup\lccode`\~=`\,
  \lowercase{\endgroup\let~}\pFqcomma
  % typeset the formula
  {}_{#2}F_{#3}{\left[\genfrac..{0pt}{}{#4}{#5};#6\right]}%
  \endgroup
}
\newcommand{\pFqcomma}{\mskip\pFqmuskip}
\begin{document}

\tikzstyle{block} = [draw, fill=blue!20, rectangle,
    minimum height=2.5em, minimum width=2.5em]
\tikzstyle{sum} = [draw, fill=blue!20, circle, node distance=1cm]
\tikzstyle{input} = [coordinate]
\tikzstyle{output} = [coordinate]
\tikzstyle{pinstyle} = [pin edge={to-,thin,black}]

\newenvironment{nscenter}
 {\parskip=0pt\par\nopagebreak\centering}
 {\par\noindent\ignorespacesafterend}

\doublespacing

%
% paper title
% Titles are generally capitalized except for words such as a, an, and, as,
% at, but, by, for, in, nor, of, on, or, the, to and up, which are usually
% not capitalized unless they are the first or last word of the title.
% Linebreaks \\ can be used within to get better formatting as desired.
% Do not put math or special symbols in the title.
\title{Non-Orthogonal Multiple Access for Visible Light Communications with Ambient Light and User Mobility}
%
%
% author names and IEEE memberships
% note positions of commas and nonbreaking spaces ( ~ ) LaTeX will not break
% a structure at a ~ so this keeps an author's name from being broken across
% two lines.
% use \thanks{} to gain access to the first footnote area
% a separate \thanks must be used for each paragraph as LaTeX2e's \thanks
% was not built to handle multiple paragraphs
%

\author{Rangeet~Mitra, Member, IEEE, Paschalis C. ~Sofotasios, Senior Member, IEEE, Vimal~Bhatia,
     Senior Member, IEEE,
 and Sami~Muhaidat, Senior Member, IEEE       % <-this % stops a space

\thanks{R. Mitra is with the Indian Institute of Information Technology SriCity-517646, India, (email: rangeet.mitra@iiits.in).}

\thanks{P. C. Sofotasios is with the Center for Cyber-Physical Systems, Department of Electrical Engineering and Computer Science, Khalifa University of Science and Technology, PO Box 127788, Abu Dhabi, UAE, and also with the Department of Electrical Engineering, Tampere University, 33101 Tampere, Finland (email: p.sofotasios@ieee.org).}

\thanks{V. Bhatia is with the Department of Electrical Engineering, Indian Institute of Technology Indore, Indore-453441, India, (email: vbhatia@iiti.ac.in).}

\thanks{S. Muhaidat is with the Center for Cyber-Physical Systems, Department of Electrical Engineering and Computer Science, Khalifa University of Science and
Technology, PO Box 127788, Abu Dhabi, UAE (email: muhaidat@ieee.org).}
%\thanks{M. Shell is with the Department
%of Electrical and Computer Engineering, Georgia Institute of Technology, Atlanta,
%GA, 30332 USA e-mail: (see http://www.michaelshell.org/contact.html).}% <-this % stops a space
%\thanks{R. Mitra and V. Bhatia are with Indian Institute of Technology Indore, Indore-453441, India, Email:phd1301202010@iiti.ac.in, vbhatia@iiti.ac.in.Copyright (c) 2015 IEEE. Personal use of this material is permitted.  However, permission to use this material for any other purposes must be obtained from the IEEE by sending a request to pubs-permissions@ieee.org.}
}% <-this % stops a space
\maketitle

% As a general rule, do not put math, special symbols or citations
% in the abstract or keywords.
\begin{abstract}
The ever-increasing demand for high data-rate applications and the proliferation of connected devices pose several theoretical and technological challenges  for the  fifth generation (5G) networks and beyond. Among others, this includes the  spectrum scarcity and massive connectivity of devices, particularly in the context of the Internet of Things (IoT) ecosystem. In this respect, visible light communication (VLC) has recently emerged as a potential solution  for these challenges, particularly in scenarios relating to indoor communications. Additionally, non-orthogonal multiple access (NOMA) for VLC has been envisioned to address some of the key challenges in the next generation wireless networks. However, in realistic environments, it has been shown that  VLC systems suffer from  additive optical interference due to ambient light, and user-mobility which cause detrimental outages and overall degraded data rates. Motivated by this, in this work, we first derive the statistics of the incurred additive interference, and then analyze the rate of the considered NOMA-VLC channel. An analytical expression  is subsequently derived for the rate of NOMA-VLC systems with ambient light and user-mobility, followed by the formulation of a power-allocation technique for the underlying scenario, which has been shown to outperform classical gain-ratio power allocation in terms of achievable rate. The obtained analytical results are corroborated with computer
simulations for various realistic VLC scenarios of interest, which lead to useful insights of theoretical and practical interest. For example, it is shown that, in a NOMA-enabled VLC system, the maximum rate at which information can be transmitted over a static VLC communication channel with ambient light asymptotically converges to the Shannon Hartley capacity formula. 
\end{abstract}
% Note that keywords are not normally used for peerreview papers.
\begin{IEEEkeywords}
Achievable rate, ambient light, NOMA, visible light communication, user-mobility, resource allocation.
\end{IEEEkeywords}

% For peer review papers, you can put extra information on the cover
% page as needed:
% \ifCLASSOPTIONpeerreview
% \begin{center} \bfseries EDICS Category: 3-BBND \end{center}
% \fi
%
% For peerreview papers, this IEEEtran command inserts a page break and
% creates the second title. It will be ignored for other modes.
\IEEEpeerreviewmaketitle
\section{Introduction}
Visible light communications (VLC) \cite{inan2009impact,le2010indoor,o2008visible} has emerged as a green, low-cost and secure communication paradigm to address some of the key challenges for the next-generation communication systems \cite{haas2016lifi}. In VLC systems, wireless information transmission is realized by modulating the intensity of the light emitting diode (LED) at a switching rate imperceptible to the human eye \cite{inan2009impact}, thereby achieving a dual goal of illumination and data-transmission. At the receiver terminal, a photodetector (PD) is employed to convert the fluctuations in the received beams into current that is used for data recovery.

Likewise, non-orthogonal multiple access (NOMA) has emerged as a viable approach to accommodate more users over the same time-frequency resources \cite{ding2016impact,ding2016general,ding2016application}. Users' signals can be superposition coded by assigning distinct power levels to different users according to their encountered channel-conditions. At the receivers, multi-user detection (MUD) and interference mitigation is realized by successive interference cancellation (SIC). In NOMA, SIC first decodes users with higher transmission power and then subtracts them from its received signals while treating all other users' signal as noise. It has been shown that power allocation significantly affects the performance of NOMA based VLC systems. Based on this, optimum and effective sub-optimum algorithms can result to higher data  rates as compared to orthogonal multiple access techniques e.g. orthogonal frequency division multiple access (OFDMA), which renders it attractive for VLC \cite{marshoud2016non,zhang2017user,lin2017experimental,rmitratran2017,marshoud2017performance}. However, the contributions in \cite{marshoud2016non,zhang2017user,lin2017experimental,rmitratran2017,marshoud2017performance} mostly model the overall additive noise-process as additive white Gaussian noise, and the corresponding NOMA based performance analysis is carried out under this assumption. {However, upon considering the effect of ambient light, the overall noise process for a NOMA-VLC system deviates from Gaussianity \cite{saleh2013photoelectron,pergoloni2015mutual}, which calls for a thorough investigation on the corresponding  impact in realistic communication scenarios. Based on this, the primary aim of the present contribution is to quantify the achievable capacity for realistic NOMA-VLC systems impaired by ambient light.}

Nevertheless, it is also recalled that despite the promising advantages of VLC, there are several performance-limiting impairments when considering user mobility, which has been mostly characterized by models such as the random walk model \cite{marshoud2016non}. In the same context, the authors in  \cite{yin2016performance} derived a statistical model for the probability density function (p.d.f) of channel-gain due to variation in location  of the users, which in turn resulted to an achievable rate reduction \cite{marshoud2017performance}. Another performance-impairing artefact that degrades the performance of VLC systems is the additive interference due to ambient light. The intensity of ambient light, which is modeled as polarized thermal light, can be statistically characterized by a chi-squared random variable \cite{saleh2013photoelectron,pergoloni2015mutual}, which results in additive interference and reduction of the overall SNR. Though there exists literature that indicates that ambient light can be suppressed by deployment of a blue filter \cite{islim2018impact}, its deleterious effects on the performance of VLC systems is non-negligible. Hence, analytical insights are essential for accurately determining the performance of such VLC systems impaired by ambient light and user-mobility.

The contributions of this work are summarized below:
\begin{itemize}
\item An exact analytical expression is derived for the p.d.f of the overall additive distortion in presence of ambient lighting and blue filter, which is an optical filter used for improving the overall signal to noise ratio in VLC systems.
\item The corresponding capacity for NOMA-VLC systems is quantified for \textit{static users}, and for scenarios with \textit{user-mobility} and additive interference due to \textit{ambient lighting}. 
\item Capitalizing on the above, and assuming negligible mobility, asymptotic equivalence of capacity of the considered channel to the Shannon-Hartley (SH) formula \cite{cover2012elements} is established for \textit{static} user scenarios.
\item A novel alternating projection based approach for effective power allocation is proposed for ambient light impaired NOMA scenarios, which is found to outperform classical gain-ratio power allocation (GRPA) in both static and mobility-impaired NOMA-VLC scenarios.
\item The derived analytical results are corroborated by extensive computer simulations, which verify their validity, and provide useful insights which will be useful in the design and deployment of VLC systems.  
\end{itemize}
The remainder of this paper is organized as follows: Section II describes the considered system model, and Section III characterizes the p.d.f for the additive distortion. Section IV, presents the simulation results along with useful discussions. Finally, conclusions are drawn in Section V.
%\vspace{-0.5cm}
\section{System Model}\label{Sec2}
In this section, we outline the system model adopted in this work. First, we describe the considered NOMA setup and then provide details on the considered noise model. %Subsequently, a statistical model for user mobility is described in the presence of channel-estimation error.
\subsection{NOMA model}
In this subsection, the system model is presented for power domain NOMA-VLC.
Let $b^{(u)}$ denote the bitstream corresponding to the $u^{th}$ user, and $s^{(u)}$ denote modulation of $b^{(u)}$ to on-off keying (OOK). Also, let $s^{(u)}+\Gamma$ represent the transmitted signal, with $\Gamma$ denoting the DC offset, which is necessary in order to forward-bias the LED. 
%\begin{gather}
%A(x) = \frac{x}{(1+(\frac{x}{x_{0}})^{2p})^{\frac{1}{2p}}}
%\end{gather}
%where $x_{0}$ is the saturation voltage of the LED and $p$ is a factor that controls the LED nonlinearity. Value of $p=0.5 $ is considered as a severe nonlinearity in \cite{elgala2010led}, and the same has been considered throughout this work.
To this effect, the received sample $x^{(u)}$ corresponding to transmitted symbol $s^{(u)}$ can be represented as
\begin{gather}\label{sysmod}
x^{(u)} = h^{(u)}\sum\displaylimits_{u=1}^{U}\sqrt{P^{(u)}}(s^{(u)}+\Gamma) + \underbrace{\alpha n^{(u)}+\beta \frac{(w^{(u)}-\nu)}{\sqrt{2\nu}}}_{\phi}
\end{gather}
where $h^{(u)}$ denotes the path gain of the corresponding $u^{th}$ user, $P^{(u)}$ denotes the power allocated to the $u^{th}$ user, $\Gamma$ denotes the DC offset, $n^{(u)}$ denotes additive white Gaussian noise of unit variance, which models thermal noise added at the photodetector, and $w^{(u)}$ denotes a Chi-squared distributed random variable with $\nu$ degrees of freedom, which models additive interference due to ambient light (see Section \ref{nm} for further details). {Notably, The user grouping is performed such as to maximize the sum-rate among those user partitions where the sum-rate are in ascending order \cite{8315498}\footnote{It is  noted that advanced methods of user grouping are beyond the scope of
the present contribution.}.}

The VLC channel model used in the considered set up is represented as follows \cite{marshoud2016non}:
%\begin{align}
%h_{k}^{(u)}=&\qquad\frac{\mathcal{A}_{e}}{d^2\sin^{2}\Psi_{}}\mathcal{R}(\Theta_{})\cos\theta_{}&&\qquad,0<\Theta<\Psi_{}\\ \nonumber
%=&\qquad0&&\qquad,\text{otherwise}
%\end{align}
\begin{equation}\label{pdfeqn1}
{h^{(u)}= \begin{cases} \frac{\mathcal{A}_{e}}{d^2\sin^{2}\Psi_{}}\mathcal{R}(\Theta_{})\cos\theta_{}, & \qquad {}0<\Theta<\Psi_{}\\
 0 &\qquad {}\text{otherwise}
 \end{cases}}
\end{equation}
where $\mathcal{A}_{e}$ denotes the area of photodetector, $d$ is the distance between the transmitter LED and the photodetector, $\Theta$ is the perpendicular angle of LED, and $\theta$ is the angle between transmit LED and photodetector with the receiver axis. Furthermore, 
$\Psi$ denotes the field of view (FOV) for the photodetector, $R(\Theta)$ denotes the Lambertian radiant intensity which can be written as $R(\Theta_{})=\frac{(m+1)\cos^{m}(\Theta)}{2\pi}$, where $m$ represents the order of Lambertian emission given as
$m = -\ln 2/\ln\cos\Theta_{\frac{1}{2}}$, with $\Theta_{\frac{1}{2}}$ denoting the semi-angle at half-power of LED.

It is worthwhile to mention that the users' channel-gains are sorted in ascending order, i.e., $h^{(1)}<h^{(2)}<h^{(3)}\cdots<h^{(U)}$, where $U$ is the number of users. 

For gain ratio power allocation \cite{marshoud2016non}, this implies that, $P^{(1)}>P^{(2)}\cdots>P^{(U)}$, and at each UE, users' symbols are recovered by an SIC. However, the monotonicity of power-coefficients is not generally guaranteed, and depends on chosen rate-quality of service (QoS) points \cite{vaezi2018non}. 
\subsection{Noise Model}
\label{nm}
This subsection details the considered statistical model for the additive interference.
To this end, $\alpha$ and $\beta$ denote variance/weighting factors, (in particular $\beta$ denotes the attenuation in ambient light after blue filtering, and $\alpha^2$ denotes the variance of the Gaussian component), $n^{(u)}$ is the additive white Gaussian noise at the $u^{th}$  UE with unit variance, and $w^{(u)}$ denotes the intensity of ambient light at $u^{th}$ UE, which can be modeled as a Chi-squared random variable with $\nu$ degrees of freedom (denoted by $\chi^2_{\nu}(\cdot)$) \cite{saleh2013photoelectron,pergoloni2015mutual}. {The additive term  $\phi$ denotes the sum of scaled Gaussian and central Chi-squared distributions. It can be noted that $\nu$ (the mean of Chi-squared r.v.) is subtracted from the $\chi^2_{\nu}(\cdot)$ random variable, since, a DC blocking capacitor is assumed to remove the DC offset $\Gamma$.} Additionally, the scaling factor $\sqrt{2\nu}$ makes the random variable unit variance for simulation of various signal-to-noise ratios. 
\subsection{Statistical modeling of user mobility}
To model user mobility, a statistical model for VLC channel can be found \cite{yin2016performance}, wherein the p.d.f of a line-of-sight VLC channel (denoted by $p(h)$) assuming user mobility is expressed as
\begin{equation}\label{pdfeqn1}
{p(h)= \begin{cases} K\text{ }h^{-\frac{2}{m+3}-1} &\qquad h \in [h_{\text{min}},h_{\text{max}}]\\
 0 &\qquad  \text{otherwise}
 \end{cases}}
\end{equation}
where 
\begin{gather}
h_{\text{min}}= \frac{K_{1}(m+1)L^{m+1}}{(r_{\text{max}}^2+L^2)^{\frac{m+3}{2}}}
\end{gather}
and 
\begin{gather}
h_{\text{max}}= \frac{K_{1}(m+1)L^{m+1}}{(r_{\text{max}})^{{m+3}}}
\end{gather}
with 
\begin{gather}
K_{1}=\frac{A_{p}}{2\pi}
\end{gather}
where $A_{p}$ is the effective photodetector area, $r_{\text{max}}$ is the maximum coverage radius, $L$ is the height of the LED, and  
\begin{gather}
\small
K=\frac{2K_{1}^{\frac{2}{m+3}}((m+1)L^{m+1})^{\frac{2}{m+3}}}{(m+3)r_{\text{max}}^2}.
\end{gather}
\normalsize
Notably, the CDF of $p(h)$ (denoted as $P(h)$) can be written as
\begin{equation}\label{pdfeqn1}
{P(h)= \begin{cases} \frac{h^{-\frac{2}{m+3}}-h_{\min}^{-\frac{2}{m+3}}}{-\frac{2}{(m+3)K}} & \qquad h \in [h_{\text{min}},h_{\text{max}}]\\
 0 &  \qquad h \in (-\infty,h_{\min}]\\
 1 &  \qquad h \in [h_{\max},\infty)
 \end{cases}}
\end{equation}
To this effect and dennoting constants 
\begin{gather}
f_{1}= \frac{-(m+3)K h_{\min}^{-\frac{2}{m+3}}}{2}
\end{gather} 
and 
\begin{gather}
f_{2} = \frac{-(m+3)K}{2}
\end{gather}
the p.d.f given for the $u^{th}$ layer of SIC, incorporating ordered channel statistics can be written as \cite{8501953}
%\begin{gather}
%p_{(u)}(h) = \frac{U!K}{(u-1)!(U-u)!}h^{-\frac{2}{m+3}-1}P(h)^{l-1}(1-P(h))^{L-l}
%!TEX encoding = UTF-8 Unicode\end{gather}
\begin{equation}\label{pdfeqn1}
{p_{(u)}(h)= \begin{cases} \frac{U!K h^{-\frac{2}{m+3}-1}P(h)^{u-1}(1-P(h))^{L-u}}{(u-1)!(U-u)!} & \qquad h \in [h_{\text{min}},h_{\text{max}}]\\
 0 &\qquad  \text{otherwise}
 \end{cases}}
\end{equation}
Applying Binomial theorem, the above expression can be rewritten as
\begin{gather*}
p_{(u)}(h)=\frac{U!K}{(u-1)!(U-u)!}h^{-\frac{2}{m+3}-1}\times\\ \nonumber
\sum_{i=0}^{u-1}\sum_{j=0}^{U-u}{{u-1}\choose i}{{U-u}\choose j} (-f_{1})^{i}(f_{2}h^{-\frac{2}{m+3}})^{u-1-i}
(1+f_{1})^{j}(-f_{2}h^{-\frac{2}{m+3}})^{U-u-j}
\end{gather*}
which after some algebraic manipulations yields
\begin{gather}
p_{(u)}(h)=\frac{U!K}{(u-1)!(U-u)!}h^{-\frac{2}{m+3}-1}
\sum_{i=0}^{u-1}\sum_{j=0}^{U-u}{{u-1}\choose i}{{U-u}\choose j} (-f_{1})^{i}
(1+f_{1})^{j}(-f_{2}h^{-\frac{2}{m+3}})^{U-j-i-1}
\end{gather}
or
\begin{gather}\label{osc}
p_{(u)}(h)=\frac{U!K}{(u-1)!(U-u)!}
\sum_{i=0}^{u-1}\sum_{j=0}^{U-u}{{u-1}\choose i}{{U-u}\choose j} (-f_{1})^{i}
(-1)^{U-j-i-1}(1+f_{1})^{j}\times\\ \nonumber
(f_{2}^{U-j-i-1}h^{-\frac{2}{m+3}({U-j-i})-1}).
\end{gather}
It is noted that the derivation of this p.d.f using ordered statistics of the ordered channel gains, facilitates the  derivation of the corresponding  expressions for capacity in subsequent sections.  
\section{Performance Evaluation}
In this section, we analyze the considered NOMA setup with the new noise model, in order to quantify the corresponding  achievable rates in the presence of ambient light.
\subsection{Statistical analysis of noise p.d.f and capacity for static users}
In this subsection, we characterize the p.d.f of the additive distortion, and the major findings are summarized in the following two theorems. 
\begin{thm}	
\it{The additive distortion (denoted by $\phi$), which is given by the following equation: 
\begin{gather}
\label{phi_ref}
\phi=\alpha n^{(u)}+\beta \frac{(w^{(u)}-\nu)}{\sqrt{2\nu}} 
\end{gather}
is drawn from the following p.d.f.:
\begin{gather}
\label{p_phi_ref}
\allowdisplaybreaks
p(\phi)=\sum_{m=0}^{\infty}\frac{(\frac{\nu}{2})_{m}}{m!}\Big(\frac{2\beta}{\alpha}\Big)^{m}H_{m}\Big(\frac{(\phi+\beta^{'}\nu)}{\alpha}\Big)\exp\Big(-\frac{(\phi+\beta^{'}\nu)^2}{2\alpha^2}\Big)
\end{gather}
where $\beta^{'}=\frac{\beta}{\sqrt{2\nu}}$.
with $H_{m}(\cdot)$ denoting the $m^{th}$ Hermite polynomial, $(\cdot)_{m}$ denotes Pochhammer symbol, and $\Theta_{m}=\frac{(\frac{\nu}{2})_{m}}{m!}\Big(\frac{2\beta}{\alpha}\Big)^{m}$.
} 
\end{thm}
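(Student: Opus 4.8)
The plan is to treat $\phi$ as a sum of two independent contributions and to recover the density of that sum by convolution. Writing $\beta'=\beta/\sqrt{2\nu}$ as in the statement, observe that $\phi+\beta'\nu = \alpha n^{(u)} + \beta' w^{(u)}$, so after the deterministic shift by $\beta'\nu$ the problem reduces to finding the law of $\psi := \alpha n^{(u)} + \beta' w^{(u)}$, the sum of an independent $\mathcal{N}(0,\alpha^2)$ variable and the scaled central $\chi^2_{\nu}$ variable $\beta' w^{(u)}$. The latter is a Gamma variable of shape $\nu/2$ and scale $2\beta'$, with density proportional to $z^{\nu/2-1}e^{-z/(2\beta')}$ on $(0,\infty)$. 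This reduction already accounts for the shifted argument $\phi+\beta'\nu$ and the Gaussian factor $\exp(-(\phi+\beta'\nu)^2/2\alpha^2)$ appearing throughout \eqref{p_phi_ref}; the remaining work is to expand the convolution into the stated Hermite series.

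Next I would write the convolution explicitly,
\[ p_\psi(\psi)=\frac{1}{\alpha\sqrt{2\pi}}\,\frac{1}{(2\beta')^{\nu/2}\Gamma(\nu/2)}\int_0^{\infty}\exp\Big(-\frac{(\psi-z)^2}{2\alpha^2}\Big)z^{\nu/2-1}e^{-z/(2\beta')}\,dz, \]
and expand the Gaussian kernel as a Taylor series in $z$ about $z=0$ via $\exp(-(\psi-z)^2/2\alpha^2)=\sum_{m=0}^{\infty}\frac{(-z)^m}{m!}\frac{d^m}{d\psi^m}\exp(-\psi^2/2\alpha^2)$. Integrating term by term then requires only the elementary Gamma moment $\int_0^{\infty}z^{m+\nu/2-1}e^{-z/(2\beta')}\,dz=\Gamma(m+\nu/2)(2\beta')^{m+\nu/2}$, and the ratio $\Gamma(m+\nu/2)/\Gamma(\nu/2)=(\nu/2)_m$ is exactly the Pochhammer symbol appearing in $\Theta_m$. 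Equivalently, one may start from the characteristic function $\Phi_\psi(t)=e^{-\alpha^2t^2/2}(1-2i\beta' t)^{-\nu/2}$, expand the $\chi^2_\nu$ factor by the generalized binomial theorem, and invert termwise; both routes produce the same coefficients.

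The $m$-th term now carries $\frac{d^m}{d\psi^m}\exp(-\psi^2/2\alpha^2)$, which I would convert to a Hermite polynomial through the Rodrigues relation $\frac{d^m}{d\psi^m}\exp(-\psi^2/2\alpha^2)=(-1)^m\alpha^{-m}H_m(\psi/\alpha)\exp(-\psi^2/2\alpha^2)$, in the Hermite normalization consistent with the statement. Substituting, cancelling the signs, and collecting the scalar factors yields a series of the form $\sum_m \Theta_m H_m(\psi/\alpha)\exp(-\psi^2/2\alpha^2)$; undoing the shift $\psi=\phi+\beta'\nu$ then returns \eqref{p_phi_ref}. The only remaining bookkeeping is to match the overall Gaussian normalization and to express the ratio multiplying the $m$-th power in the constant $\Theta_m$ exactly as written.

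The hard part will not be the algebra but the rigorous justification of the term-by-term integration. Because the $\chi^2_\nu$ component contributes an exponential rather than Gaussian right tail, $p_\psi$ decays only like $e^{-\psi/(2\beta')}$, slower than $e^{-\psi^2/4\alpha^2}$, so the resulting Hermite expansion is a Gram--Charlier-type series that does not converge in the naive dominated sense, and the Taylor expansion of the kernel does not majorize integrably against the Gamma weight. I would therefore legitimize the interchange carefully, for instance by integrating truncated partial sums and bounding the remainder, or by interpreting \eqref{p_phi_ref} as the expansion obtained by analytic continuation along the characteristic-function route, rather than by a one-line Fubini argument. Once the interchange is secured, the identification of each coefficient through the Gamma moment and the Rodrigues formula is routine.
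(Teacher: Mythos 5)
Your proposal is correct and reaches the stated expansion, but it is organized differently from the paper's proof. The paper works entirely in the transform domain: it starts from the MGF $M(t)=\exp(t^{2}\alpha^{2}/2)\,(1-2\beta t)^{-\nu/2}$, expands the chi-squared factor by the generalized binomial series (which is where the Pochhammer coefficients $(\tfrac{\nu}{2})_{m}$ arise), inverts term by term --- each factor $t^{m}\exp(t^{2}\alpha^{2}/2)$ inverting to the $m$-th derivative of a Gaussian --- and then invokes the Rodrigues formula to produce the Hermite polynomials, finishing with the same shift/scale step you use. You instead work in the density domain: convolution of the Gaussian with the Gamma density of shape $\nu/2$ and scale $2\beta'$, Taylor expansion of the Gaussian kernel, and termwise integration against Gamma moments, with $(\tfrac{\nu}{2})_{m}$ arising as $\Gamma(m+\nu/2)/\Gamma(\nu/2)$. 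The two computations are Laplace-transform duals of one another (you note this yourself when you mention the characteristic-function alternative), so they buy essentially the same result; yours has the advantage of making the role of the Gamma tail explicit, which feeds directly into your convergence discussion. On that point, your worry is well founded and, if anything, understated: since $|H_{m}(x)|$ grows like $\sqrt{m!}$ (up to geometric and polynomial factors) at fixed $x$, while the coefficients $\Theta_{m}$ decay only geometrically, the Hermite series diverges pointwise for \emph{every} $\beta>0$, not merely for $\beta>\alpha$ as the paper's Remark 2 suggests; the identity can only be understood as an asymptotic (truncation-plus-remainder or Borel-type) expansion, exactly the repair you propose, whereas the paper performs the same formal interchange silently. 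Finally, your bookkeeping is in fact the correct one: after the shift, the natural coefficient is $(2\beta'/\alpha)^{m}$ with $\beta'=\beta/\sqrt{2\nu}$, not $(2\beta/\alpha)^{m}$ as displayed in the theorem (the paper carries the pre-shift coefficient over unchanged), and the overall factor $1/(\sqrt{2\pi}\,\alpha)$ is likewise missing from the statement; both are slips in the paper rather than gaps in your argument.
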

\begin{proof}
To prove Theorem 1, we first use a result in probability theory given in \cite{papoulis2002probability}, which states that the overall moment generating function (MGF) of a sum of independent scaled Gaussian and Chi-squared random variables can be expressed as \cite{papoulis2002probability}
\begin{gather}
\label{mgf}
M(t) = \frac{\exp(\frac{t^2\alpha^2}{2})}{(1-2\beta t)^{\frac{\nu}{2}}}.
\end{gather}
From the MGF, {expanding the denominator as a series, and using the inverse Laplace relation between MGF and p.d.f,} the corresponding p.d.f can be written as 
\begin{gather}
p(\psi) = \sum\displaylimits_{m=0}^{\infty}\frac{\Big(\frac{\nu}{2}\Big)_{m}(2\beta)^m}{m!}
\int\displaylimits_{-\infty}^{\infty}\exp(\frac{\alpha^2 t^2 \sigma^2}{2}) t^{m}\exp(-t\psi)d\psi.
\end{gather}
Noting that
\begin{gather}
\int\displaylimits_{-\infty}^{\infty}\exp(\frac{t^{2}\alpha^{2}}{2})\exp(-t\psi) d\psi= \frac{1}{\sqrt{2\pi\sigma^2}}\exp(-\frac{\psi^2}{2\alpha^2})
\end{gather}
and applying repeated derivatives with respect to $\psi$, we obtain the following \cite[eq. 8.959(1)]{gradshteyn2014table}
\begin{gather}
p(\psi) = \sum\displaylimits_{m=0}^{\infty}\frac{\Big(\frac{\nu}{2}\Big)_{m}}{m!}\Big(\frac{2\beta}{\alpha}\Big)^{m}H_{m}\left(\frac{\psi}{\alpha}\right)\exp(-\frac{\psi^2}{2\alpha^2}).
\end{gather}
Furthermore, since the weighting coefficients decrease monotonically, the expression for the p.d.f can be rewritten as
\begin{gather}
p(\psi)=\sum_{m=0}^{\infty}\frac{(\frac{\nu}{2})_{m}}{m!}\Big(\frac{2\beta}{\alpha}\Big)^{m}H_{m}\Big(\frac{\psi}{\alpha}\Big)\exp\Big(-\frac{\psi^2}{2\alpha^2}\Big)
\end{gather}
Noting the definition of $\phi$ as in (\ref{phi_ref}) using the scaling and shifting properties of p.d.f, and assuming $\beta^{'}=\frac{\beta}{\sqrt{2\nu}}$ the p.d.f of additive distortion, $\phi$, can be expressed as in (\ref{p_phi_ref}).
\end{proof}
\begin{corr}
Based on this, the CDF $P(\psi)$ can be written in the following explicit form:
\begin{gather}
\label{pl1}
P(\psi)
=\sum_{m=0}^{\infty}\frac{(\frac{\nu}{2})_{m}}{m!}\Big(\frac{2\beta}{\alpha}\Big)^{m}\Big[H_{m-1}(0)-\exp(-\frac{\psi^2}{2\alpha^2})H_{m-1}\Big(\frac{\psi}{\alpha}\Big)\Big].
\end{gather}
\end{corr}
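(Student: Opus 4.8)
The plan is to obtain $P(\psi)$ by integrating the series for $p(\psi)$ from Theorem~1 term by term, exploiting the fact that each Hermite-weighted Gaussian summand is, up to a constant, an exact derivative. Writing $\Theta_{m}=\frac{(\frac{\nu}{2})_{m}}{m!}\big(\frac{2\beta}{\alpha}\big)^{m}$ as in the theorem, I would begin from
\begin{gather*}
P(\psi)=\int^{\psi} p(t)\,dt=\sum_{m=0}^{\infty}\Theta_{m}\int^{\psi} H_{m}\Big(\frac{t}{\alpha}\Big)\exp\Big(-\frac{t^{2}}{2\alpha^{2}}\Big)\,dt,
\end{gather*}
and evaluate the inner integral in closed form before reassembling the series.

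The engine of the proof is the antiderivative identity for the Hermite polynomials, $\frac{d}{d\psi}\big[\exp(-\frac{\psi^{2}}{2\alpha^{2}})H_{m-1}(\frac{\psi}{\alpha})\big]\propto \exp(-\frac{\psi^{2}}{2\alpha^{2}})H_{m}(\frac{\psi}{\alpha})$, which is a direct consequence of the Rodrigues formula and the three-term recurrence already invoked in the proof of Theorem~1. Substituting it turns each integrand into a perfect derivative, so that by the fundamental theorem of calculus the inner integral reduces to the evaluation of $\exp(-\frac{t^{2}}{2\alpha^{2}})H_{m-1}(\frac{t}{\alpha})$ at its limits. The upper limit contributes $\exp(-\frac{\psi^{2}}{2\alpha^{2}})H_{m-1}(\frac{\psi}{\alpha})$, whereas the lower limit, taken at the origin where $\exp(0)=1$, contributes $H_{m-1}(0)$; restoring the coefficient $\Theta_{m}$ and summing over $m$ then produces precisely the bracketed expression in (\ref{pl1}).

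The main obstacle I expect lies at the two ends of this argument rather than in the intervening algebra. First, exchanging the infinite summation with the integral needs justification, since the Hermite expansion of $p(\psi)$ converges only conditionally in general; I would control the remainder via the monotone decay of the weights $\Theta_{m}$ noted in Theorem~1 together with the Gaussian factor, which dominates the polynomial growth of $H_{m}$ on every compact interval, and then invoke dominated convergence. Second, the boundary bookkeeping must be carried out with care: the Gaussian guarantees that each summand with $m\ge 1$ decays away from the origin, and the choice of the origin as the reference point is exactly what fixes the constant of integration and makes the $H_{m-1}(0)$ term appear, so I would verify that the resulting series is termwise well defined and consistent with $P$ being the cumulative distribution associated with $p$. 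Once these two points are settled, the corollary follows immediately from Theorem~1.
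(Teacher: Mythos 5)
Your proposal is correct and follows essentially the same route as the paper: term-by-term integration of the series for $p(\psi)$, with each summand reduced by the Hermite--Gaussian antiderivative identity, which the paper simply cites as \cite[eq. (7.373)]{gradshteyn2014table} while you derive it from the Rodrigues formula and recurrence. Your added care about interchanging summation and integration and about the origin as the reference point (which is indeed what produces the $H_{m-1}(0)$ term) refines, but does not depart from, the paper's argument.
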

\begin{proof}
Using (\ref{p_phi_ref}), the CDF $P(\psi)$ can be expressed in terms of the following series representation
\begin{gather}
P(\psi)=\sum_{m=0}^{\infty}\frac{(\frac{\nu}{2})_{m}}{m!}\Big(\frac{2\beta}{\alpha}\Big)^{m}\int_{-\infty}^{\psi}H_{m}\Big(\frac{\psi}{\alpha}\Big)\exp\Big(-\frac{\psi^2}{2\alpha^2}\Big)d\psi
\end{gather}
which with the aid of \cite[eq. (7.373)]{gradshteyn2014table}, can be explicitly given as per (\ref{pl1}).
\end{proof}
\begin{corr}
For the specific case of large enough $M$, the above p.d.f can be approximated as
\begin{gather}
\label{pa1}
p(\phi)=\hat{g}(\phi)\exp\Big(-\frac{(\phi+\beta^{'}\nu)^2}{2\alpha^2}\Big)
\end{gather}
where
\begin{gather}
\label{pa2}
\hat{g}(\phi)=\sum_{m=1}^{M}\Theta_{m}H_{m}\Big(\frac{\phi+\beta^{'}\nu}{\alpha}\Big)
\end{gather}
\end{corr}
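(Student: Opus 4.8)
The plan is to read off the approximation directly from the exact representation of Theorem~1 by retaining only the leading terms of its series. First I would rewrite (\ref{p_phi_ref}) in the factored form
\begin{gather}
p(\phi)=\exp\Big(-\frac{(\phi+\beta^{'}\nu)^2}{2\alpha^2}\Big)\sum_{m=0}^{\infty}\Theta_{m}H_{m}\Big(\frac{\phi+\beta^{'}\nu}{\alpha}\Big),
\end{gather}
which is legitimate because the Gaussian factor $\exp(-(\phi+\beta^{'}\nu)^2/2\alpha^2)$ does not depend on the summation index and can therefore be pulled outside. All of the remaining structure is carried by the Hermite-weighted series, whose partial sum up to order $M$ is exactly $\hat g(\phi)$ in (\ref{pa2}).

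Next I would split the series as $\sum_{m=0}^{\infty}=\sum_{m=0}^{M}+\sum_{m=M+1}^{\infty}$, identify the first group with $\hat g(\phi)$, and define the approximation error as the remainder multiplied by the Gaussian factor,
\begin{gather}
\mathcal{E}_{M}(\phi)=\exp\Big(-\frac{(\phi+\beta^{'}\nu)^2}{2\alpha^2}\Big)\sum_{m=M+1}^{\infty}\Theta_{m}H_{m}\Big(\frac{\phi+\beta^{'}\nu}{\alpha}\Big).
\end{gather}
Establishing (\ref{pa1}) then amounts to arguing that $\mathcal{E}_{M}(\phi)$ is negligible for large $M$. The favourable ingredient is the decay of the weights: using $\Theta_{m}=\frac{(\nu/2)_{m}}{m!}(2\beta/\alpha)^{m}$ together with the asymptotics $\frac{(\nu/2)_{m}}{m!}\sim m^{\nu/2-1}/\Gamma(\nu/2)$, the coefficients tend to zero whenever $2\beta<\alpha$, which is precisely the monotone decrease already invoked in the proof of Theorem~1.

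The delicate part, and the step I expect to be the main obstacle, is that the Hermite polynomials grow with their order, so the decay of $\Theta_{m}$ does not by itself settle the behaviour of the remainder. The cleanest way to close the argument is to restrict attention to a bounded range $\lvert\phi\rvert\le T$ covering the signal-to-noise ratios of practical interest; on such a compact set $H_{m}((\phi+\beta^{'}\nu)/\alpha)$ is uniformly bounded for each fixed $m$ and the Gaussian prefactor supplies additional suppression, so that $\sup_{\lvert\phi\rvert\le T}\lvert\mathcal{E}_{M}(\phi)\rvert$ can be driven below any prescribed tolerance by taking $M$ large. I would emphasise that a fully global claim is more subtle, since the underlying density inherits an exponential (rather than Gaussian) tail from the chi-squared component, making the series a Gram--Charlier-type expansion whose truncation is best understood as an asymptotic (optimal-order) approximation. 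Framing Corollary~3 as a finite-term approximation on the relevant dynamic range sidesteps this issue while matching the intended use of (\ref{pa1})--(\ref{pa2}) in the subsequent capacity analysis.
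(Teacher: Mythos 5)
Your route is genuinely different from the paper's. The paper never truncates the p.d.f.\ series directly: it truncates the CDF series of the preceding corollary at $M$ terms and then differentiates that truncation term by term, invoking the Rodrigues formula (Gradshteyn--Ryzhik 8.959(1)) to turn each $-\exp(-\psi^2/2\alpha^2)H_{m-1}(\psi/\alpha)$ into $\exp(-\psi^2/2\alpha^2)H_{m}(\psi/\alpha)$, and finally applies the shift $\psi \mapsto \phi+\beta^{'}\nu$. You instead truncate the exact series of Theorem~1 itself and treat the tail $\sum_{m>M}$ as an explicit error term $\mathcal{E}_{M}(\phi)$. Your version is more elementary and also more honest about what ``approximation'' means: the paper never justifies that differentiating an approximate CDF yields an approximate p.d.f.\ (term-by-term differentiation of a truncated series requires exactly the kind of uniform control you discuss), whereas your observations --- that the coefficient decay $\Theta_m \to 0$ needs $2\beta<\alpha$, that Hermite polynomials grow with order so coefficient decay alone is insufficient, and that a uniform statement is only available on a compact range, the expansion being of Gram--Charlier (asymptotic) type globally --- identify the real analytic content that the paper glosses over.

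One point you must make explicit, however: the partial sum you propose to ``identify with $\hat g(\phi)$'' runs from $m=0$, while (\ref{pa2}) starts at $m=1$. The discrepancy is the $m=0$ term $\Theta_{0}H_{0}=1$, i.e.\ the pure Gaussian factor, which is the \emph{dominant} contribution to the density. In the paper this offset is an artefact of the CDF route (the $m=0$ term of the CDF series involves $H_{-1}$ and is effectively discarded before differentiating), so the statement as literally written omits the leading Gaussian term and cannot coincide with any truncation of Theorem~1's series. Your argument actually proves the corrected claim with $\hat g(\phi)=\sum_{m=0}^{M}\Theta_{m}H_{m}\big(\frac{\phi+\beta^{'}\nu}{\alpha}\big)$; you should either state this, or remark that the lower limit in (\ref{pa2}) appears to be an error, rather than silently equating $\sum_{m=0}^{M}$ with the $\hat g$ of the statement.
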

\begin{proof}
For large $M$ {(i.e. number of terms in series)}, the CDF of $\psi$ can be approximated as
\begin{gather}
P(\psi)\approx\sum_{m=0}^{M}\frac{(\frac{\nu}{2})_{m}}{m!}\Big(\frac{2\beta}{\alpha}\Big)^{m}\Big[H_{m-1}(0)-\exp(-\frac{\psi^2}{2\alpha^2})H_{m-1}\Big(\frac{\psi}{\alpha}\Big)\Big].
\end{gather}
Also, an approximate p.d.f can be derived by taking the  derivative of $P(\psi)$, and using Rodigues formula \cite[eq. 8.959(1)]{gradshteyn2014table}, which yields
\begin{gather}
p(\psi)=\frac{dP(\psi)}{d\psi}=\hat{g}(\psi)\exp\Big(-\frac{\psi^2}{2\alpha^2}\Big)
\end{gather}
where
\begin{gather}
\hat{g}(\psi)=\sum\limits_{m=1}^{M}\Theta_{m}H_{m}\Big(\frac{\psi}{\alpha}\Big)
\end{gather}
and 
\begin{gather}
\Theta_{m}=\frac{(\frac{\nu}{2})_{m}}{m!}\Big(\frac{2\beta}{\alpha}\Big)^{m}
\end{gather}
for large $M$. Noting the definition of $\phi$ as in (\ref{phi_ref}) using the scaling and shifting properties of p.d.f, and assuming $\beta^{'}=\frac{\beta}{\sqrt{2\nu}}$, we can arrive at (\ref{pa1}) and (\ref{pa2}).
\end{proof}
\begin{corr}
For high $\frac{\nu}{2}$, it follows that
\begin{gather}
 \label{final1}
p(\phi) = \frac{1}{Z}\exp(\frac{2\nu\beta^{'}(\phi+\beta^{'}\nu)-(\phi+\beta^{'}\nu)^2}{2\alpha^2}) 
 \end{gather}
 where $Z$ is a suitable partition function.
\end{corr}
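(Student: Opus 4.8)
The plan is to start from the truncated representation established in the preceding corollary (the large-$M$ approximation), namely $p(\phi)=\hat g(\phi)\exp\!\big(-(\phi+\beta'\nu)^2/(2\alpha^2)\big)$ with $\hat g(\phi)=\sum_{m}\Theta_m H_m\big((\phi+\beta'\nu)/\alpha\big)$, and to show that the Hermite sum $\hat g$ degenerates to a single exponential factor that is affine in $\phi$ once $\nu/2$ is large. Writing $y=(\phi+\beta'\nu)/\alpha$ for brevity, the object to control is the Pochhammer-weighted Hermite series $\sum_{m\ge 0}\frac{(\nu/2)_m}{m!}(2\beta/\alpha)^m H_m(y)$, which I would treat as a confluent generating function of the Hermite family rather than attempt to sum term by term.

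The key tool is the elementary Hermite generating function $\sum_{m\ge0}\frac{z^m}{m!}H_m(y)=\exp(2zy-z^2)$, together with the large-parameter behaviour of the Pochhammer symbol. For fixed $m$ one has $(\nu/2)_m/(\nu/2)^m\to1$ as $\nu\to\infty$, so for high $\nu/2$ the weights $\Theta_m$ approach those of a pure exponential series with an effective argument $z$, and the summation collapses to $\hat g(\phi)\approx\exp(2zy-z^2)$. Equivalently, and more rigorously, one may use the exact integral representation $\frac{1}{\Gamma(\nu/2)}\int_0^\infty s^{\nu/2-1}\exp\!\big(-s(1-2yt)-s^2t^2\big)\,ds$ with $t=2\beta/\alpha$ and evaluate it by Laplace's method about its saddle point $s_\ast\sim\sqrt{\nu}$, which for large $\nu/2$ again produces a factor exponential and affine in $y$. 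Either route replaces the polynomial prefactor $\hat g$ by $\exp(\text{affine in }\phi)$ up to a $\phi$-independent multiplier.

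It then remains to multiply this exponential-affine factor by the Gaussian $\exp\!\big(-(\phi+\beta'\nu)^2/(2\alpha^2)\big)$, to collect the exponent, and to absorb every $\phi$-independent constant (the $-z^2$ term, the Laplace prefactor and the normalisation) into the partition function $Z$. Completing the square in $(\phi+\beta'\nu)$ then casts the exponent into the advertised form $\big[2\nu\beta'(\phi+\beta'\nu)-(\phi+\beta'\nu)^2\big]/(2\alpha^2)$, giving (\ref{final1}). As an independent sanity check, the central-limit behaviour $\frac{w^{(u)}-\nu}{\sqrt{2\nu}}\xrightarrow{d}\mathcal N(0,1)$ as $\nu\to\infty$ confirms that $\phi$ tends to a Gaussian in this regime, consistent with the collapse of the weighted Hermite series to a single exponential and with the later convergence of the rate to the Shannon--Hartley formula.

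The main obstacle is analytic rather than algebraic: the Hermite series is only asymptotically (and, after truncation at $M$, finitely) convergent, so replacing the Pochhammer weights by their large-$\nu/2$ limits inside the sum requires justifying the interchange of the limit with the summation/integration, e.g.\ via a dominated-convergence or uniform-asymptotics argument valid on the range of $\phi$ of interest. Closely tied to this is the bookkeeping of which factors are genuinely independent of $\phi$ and may legitimately be folded into $Z$; keeping that separation clean is what allows the Gaussian form to emerge.
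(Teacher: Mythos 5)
Your proposal is correct and takes essentially the same route as the paper: the paper's proof likewise replaces $(\tfrac{\nu}{2})_m$ by $(\tfrac{\nu}{2})^m$ for large $\nu/2$ (under $\beta<\alpha$), sums the resulting series via the Hermite generating function (Gradshteyn--Ryzhik, eq.\ 8.957(1)), absorbs all $\phi$-independent factors into the partition function $Z$, and then applies the shift by $\beta^{'}\nu$ to pass from $\psi$ to $\phi$. Your Laplace/saddle-point alternative and the CLT sanity check are additions the paper does not contain, but the core argument is identical.
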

\begin{proof}
For large values of $\nu$ and $\beta<\alpha$ (i.e. assuming that the power of Chi-squared random variable is lower than the Gaussian random variable after blue filtering), we note that $(\frac{\nu}{2})_{m}\approx(\frac{\nu}{2})^{m}$. Based on this we can further approximate $p(\psi)$ as \cite[eq. 8.957(1)]{gradshteyn2014table}
 \begin{gather}
 \label{pdf_psi}
p(\psi) = \frac{1}{Z}\exp(\frac{2\nu\beta\psi-\psi^2}{2\alpha^2}) 
 \end{gather}
where 
\begin{gather}
Z=\int\displaylimits_{-\infty}^{\infty}\exp(\frac{2\nu\beta\psi-\psi^2}{2\alpha^2})d\psi=\exp(\frac{\nu^{2}\beta^{2}}{2\alpha^2})\sqrt{2\pi\alpha^2} 
\end{gather}
is the partition function.
Hence, using the scaling and shifting properties of MGF, and assuming $\beta^{'}=\frac{\beta}{\sqrt{2\nu}}$ the p.d.f of additive distortion can be expressed as
 \begin{gather}
 \label{final_11}
p(\phi) = \frac{1}{Z}\exp(\frac{2\nu\beta^{'}(\phi+\beta^{'}\nu)-(\phi+\beta^{'}\nu)^2}{2\alpha^2}). 
 \end{gather}
 which proves the result.
 \end{proof}
%\paragraph*{An Aliter} 
\begin{remark}
One can express the exact expression for the p.d.f as
\begin{gather}
p(\psi) = \sum\limits_{m=0}^{\infty}\frac{\Big(\frac{\nu}{2}\Big)_{m}}{m!} \Big(\frac{2\beta}{\alpha}\Big)^{m} H_{m}\Big(\frac{\psi}{\alpha}\Big)
\end{gather}
which can be equivalently re-written as an inverse Wishart transform, namely,
\begin{gather}
p(\psi) = \exp(-\frac{D^2}{2})\sum\limits_{m=0}^{\infty}\frac{\Big(\frac{\nu}{2}\Big)_{m}}{m!} \Big(\frac{2\beta\psi}{\alpha^2}\Big)^{m} 
\end{gather}
where $D$ denotes the differentiation operator w.r.t $\psi$. After some algebraic manipulations, the above equation can be rewritten as
\begin{gather}
p(\psi) = \exp(-\frac{D^2}{2})\Bigg[\frac{1}{\Big(1-\frac{2\beta\psi}{\alpha^2}\Big)^{\frac{\nu}{2}}}\Bigg]
\end{gather}
where
\begin{gather}
D^{2m}\Bigg[\frac{1}{(1-\frac{2\beta\psi}{\alpha})^{\frac{\nu}{2}}}\Bigg]=\frac{(\frac{\nu}{2})_{2m}(\frac{2\beta}{\alpha})^{2m}}{\Big(1-\frac{2\beta\psi}{\alpha^2}\Big)^{\frac{\nu}{2}+2m}}
\end{gather}
Based on the above, it follows that,
\begin{gather}
p(\psi) = \frac{\sum\limits_{m=0}^{\infty}\frac{(\frac{nu}{4})_{m}(\frac{2+\nu}{4})_{m}}{m!} \Big[\frac{-8\beta^2}{\alpha^{2}(1-\frac{2\beta\psi}{\alpha^2})^2}\Big]^m}{\Big(1-\frac{2\beta\psi}{\alpha}\Big)^{\frac{\nu}{2}}}\exp(-\frac{\psi^2}{2\alpha^2})
\end{gather}
which can be expressed in closed-form in terms of the confluent hypergeometric function, namely
\begin{gather}
\label{ffi}
p(\psi) = \frac{\pFq{2}{0}{\frac{\nu}{4},\frac{2+\nu}{4}}{}{\frac{-8\beta^2}{\alpha^{2}(1-\frac{2\beta\psi}{\alpha^2})^2}}}{\Big(1-\frac{2\beta\psi}{\alpha}\Big)^{\frac{\nu}{2}}}\exp(-\frac{\psi^2}{2\alpha^2}).
\end{gather}
It is evident that assuming $\beta<<\alpha$, the above equation can be accurately approximated by (\ref{final_11}). 
Another form of (\ref{ffi}) can be deduced by invoking the property of hypergeometric function in \cite[eq. (07.31.03.0083.01)]{bworld}, which yields
\begin{gather}
p(\psi)=\alpha^{\frac{\nu}{2}}\beta^{-\frac{\nu}{2}}H_{-\frac{\nu}{2}}\Big(\frac{\alpha(1-\frac{2\beta\psi}{\alpha^2})}{2\sqrt{2}\beta}\Big)\exp(-\frac{\psi^2}{2\alpha^2}).
\end{gather}
Furthermore, from \cite[eq. (07.01.02.0001.01)]{bworld}, the above equation can be further simplified as
\begin{gather}
p(\psi)=\alpha^{\frac{\nu}{2}}\beta^{-\frac{\nu}{2}}2^{\frac{\nu}{2}}\sqrt{\pi}\Bigg[\frac{\pFq{1}{1}{\frac{\nu}{4}}{\frac{1}{2}}{\frac{\alpha^{2}(1-\frac{2\beta\psi}{\alpha^2})^2}{8\beta^2}}}{\Gamma[\frac{k+2}{4}]}-\frac{\frac{\alpha}{\sqrt{2}\beta}(1-\frac{2\beta\psi}{\alpha^2})}{\Gamma[\frac{k}{4}]}\pFq{1}{1}{\frac{2+\nu}{4}}{\frac{3}{2}}{\frac{\alpha^{2}(1-\frac{2\beta\psi}{\alpha^2})^2}{8\beta^2}}
\Bigg]\exp(-\frac{\psi^2}{2\alpha^2})
\end{gather}
\end{remark}
\begin{remark}
The series diverges for $\beta>\alpha$, which is the reason VLC is not viable when ambient light dominates severely, and justifies the failure of VLC when ambient light is predominant.
\end{remark}
\begin{thm}\label{app1}
\it{With blue filtering, the formula for rate in the considered scenario for small $\beta<<\alpha$ is asymptotically identical to the SH formula, which is denoted here by $R_{sh}^{(u)}$ for the $u^{th}$ user, and can be expressed as
\begin{gather}
\label{ab1}
R_{sh}^{(u)}=\frac{1}{2}\log(1+\frac{P^{(u)}}{\sum\limits_{l>u}P^{(l)}+
	\frac{\alpha^2}{|h^{(u)}|^2}})
\end{gather}
%Further since the entropy of $\phi$ depends only on second order statistics, MMSE based channel-estimation is information theoretically optimal. 
}
\end{thm}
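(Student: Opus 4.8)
The plan is to combine the limiting form of the noise density established above with the classical additive-noise capacity bounds. The argument has two ingredients: (i) in the regime $\beta\ll\alpha$ the distortion $\phi$ degenerates to a zero-mean Gaussian of variance $\alpha^2$, and (ii) over a real additive-Gaussian-noise channel the per-user NOMA rate under SIC equals the claimed expression.

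First I would establish the Gaussian limit of $\phi$. Directly from the exact series (\ref{p_phi_ref}), every summand with index $m\ge1$ carries the factor $(2\beta/\alpha)^m$, which is negligible as $\beta\to0$; hence only the $m=0$ term survives and $p(\phi)\to\frac{1}{\sqrt{2\pi\alpha^2}}\exp(-\phi^2/2\alpha^2)$ pointwise. The same conclusion follows from the high-$\nu/2$ corollary, equation (\ref{final_11}): the substitution $\beta^{'}=\beta/\sqrt{2\nu}\to0$ annihilates both the linear term $2\nu\beta^{'}(\phi+\beta^{'}\nu)$ and the mean-shift $\beta^{'}\nu$, while $Z\to\sqrt{2\pi\alpha^2}$. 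A direct moment computation also gives $\mathbb{E}[\phi]=0$ and $\mathrm{Var}(\phi)=\alpha^2+\beta^2$, so the variance tends to $\alpha^2$ and the limiting law is exactly $\mathcal{N}(0,\alpha^2)$. The remark on divergence for $\beta>\alpha$ confirms that this whole regime is meaningful only for $\beta<\alpha$, i.e. precisely when blue filtering keeps ambient light subdominant.

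Next I would set up the effective post-SIC channel for user $u$. After the DC-blocking capacitor removes $\Gamma$ and the ambient mean, and after SIC cancels the higher-power users $l<u$, equation (\ref{sysmod}) reduces to $x^{(u)}=h^{(u)}\sqrt{P^{(u)}}s^{(u)}+h^{(u)}\sum_{l>u}\sqrt{P^{(l)}}s^{(l)}+\phi$. Treating the residual lower-power layers as interference of power $|h^{(u)}|^2\sum_{l>u}P^{(l)}$ and $\phi$ as noise of power $\alpha^2$, the resulting signal-to-interference-plus-noise ratio is $\mathrm{SINR}^{(u)}=|h^{(u)}|^2P^{(u)}/(|h^{(u)}|^2\sum_{l>u}P^{(l)}+\alpha^2)$. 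Since the channel is real-valued (intensity modulation), the Gaussian-input mutual information is $\tfrac12\log(1+\mathrm{SINR}^{(u)})$; dividing numerator and denominator of the SINR by $|h^{(u)}|^2$ reproduces (\ref{ab1}) verbatim.

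The main obstacle is upgrading the heuristic ``noise is Gaussian, hence rate is Shannon--Hartley'' into a genuine asymptotic identity. For strictly positive $\beta$ the distortion is non-Gaussian, so the true achievable rate departs from $\tfrac12\log(1+\mathrm{SINR}^{(u)})$ by the non-Gaussianness penalty $0\le h_{G}(\phi)-h(\phi)=D\!\left(p(\phi)\,\|\,\mathcal{N}(0,\alpha^2)\right)$, where $h_G$ is the entropy of the variance-matched Gaussian; this is exactly the gap in the classical sandwich $\tfrac12\log(1+\mathrm{SINR})\le C\le\tfrac12\log(1+\mathrm{SINR})+D(p(\phi)\|\mathcal{N}(0,\alpha^2))$. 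The crux is therefore to prove that this relative entropy vanishes as $\beta\to0$. I would do this by promoting the pointwise convergence of $p(\phi)$ to the Gaussian density (step one) to convergence of the differential entropy, e.g. by exhibiting a $\beta$-uniform integrable dominating envelope for $p(\phi)\log p(\phi)$ on the admissible regime $\beta<\alpha$ and invoking dominated convergence. Once $D(p(\phi)\|\mathcal{N}(0,\alpha^2))\to0$ is secured, both bounds squeeze to $\tfrac12\log(1+\mathrm{SINR}^{(u)})=R_{sh}^{(u)}$, which establishes the asymptotic equivalence.
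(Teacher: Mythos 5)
Your proposal is correct in outline, but it takes a genuinely different route from the paper. The paper does not use a capacity sandwich at all: it computes the differential entropies of $\mathcal{Y}=\mathcal{Z}+\mathcal{P}_1+\mathcal{P}_2$ and of $\mathcal{P}_1+\mathcal{P}_2$ explicitly, plugging the max-entropy form of the approximate noise density from Corollary 3 (equation (\ref{final_11})) into an entropy formula and invoking a CLT argument for the multiuser interference; subtracting the two entropies yields the exact finite-$\beta$ rate expression (\ref{final_rate}), i.e. $R^{(u)}=R_{sh}^{(u)}$ minus an explicit penalty proportional to $\beta^2$, and the theorem follows by dropping that penalty when $\beta\ll\alpha$. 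You instead prove that the noise law degenerates to $\mathcal{N}(0,\alpha^2)$ as $\beta\to 0$ and squeeze the achievable rate between the worst-case-Gaussian lower bound and an Ihara-type upper bound whose gap is the non-Gaussianness divergence, then argue that this divergence vanishes. Each approach buys something: the paper's route produces the quantitative $O(\beta^2)$ correction term, which is not a by-product of your squeeze and is reused later as the power-allocation objective (\ref{rran}); your route gives a cleaner limit statement resting on classical information-theoretic bounds rather than on the paper's chain of approximations (large-$\nu$ density approximation, CLT, series truncation), but it yields only the limit and defers the real technical work to proving $D\bigl(p_{\phi}\,\|\,\mathcal{N}\bigr)\to 0$, which you only sketch via dominated convergence. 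Two points to repair if you carry your argument out: (i) the variance-matched reference Gaussian is $\mathcal{N}(0,\alpha^2+\beta^2)$, not $\mathcal{N}(0,\alpha^2)$, so the identity $h_{G}(\phi)-h(\phi)=D\bigl(p_{\phi}\,\|\,\mathcal{N}(0,\alpha^2)\bigr)$ as written is false for $\beta>0$ (though harmless in the limit); and (ii) the sandwich must be applied to the full effective noise $\mathcal{P}_1+\mathcal{P}_2$ (Gaussian interference plus AWGN plus the ambient term), so you additionally need the standard fact that convolving with an independent Gaussian does not increase the divergence from Gaussianity in order to reduce the penalty to that of $\phi$ alone.
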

\begin{proof}
We divide our proof into two parts as given below. 
\subsubsection{Expression for Entropy of sum of Gaussian and Chi-squared random variable}
From (\ref{final_11}), we can conclude that $\mathbb{E}[\phi] = 0$, and $\text{var}[\phi]=\alpha^2+\beta^2$. Taking the $\log$ of $p(\phi)$, and the statistical expectation on both sides yields
 %\begin{gather}
 %\label{final}
%\log[p(\phi)] = -\log(Z) + \Big(\frac{2\nu(\psi+\beta^{'}\nu)-(\psi+\beta^{'}\nu)^2}{2\alpha^2}\Big) 
% \end{gather}
 \smallskip
 \begin{gather}
 \label{final}
-\mathbb{E}[\log[p(\phi)]] = \mathbb{E}[\log(Z)] - \newline\frac{\mathbb{E}[2\nu\beta^{'}(\phi+\beta^{'}\nu)-(\phi^2+\nu^2\beta^{'^{2}}+2\phi\nu\beta^{'})]}{2\alpha^2}
 \end{gather}
 \normalsize
%{\color{blue} Since the entropy is dependent only on the square of $\mathbb{E}[\phi^2]$, the information-theoretic optimality (in the sense of Shannon) of the minimum mean squared error based channel-estimation is established for the considered system model.}
To this effect, and recalling that $\mathbb{E}[\phi] = 0$, and $\text{var}[\phi]=\alpha^2+\beta^2$, the above expression can be further simplified as follows:
\begin{gather}
-\mathbb{E}[\log[p(\phi)]] = \mathbb{E}[\log(Z)] - \newline\frac{\mathbb{E}[2\nu^2\beta^{'^{2}}-\alpha^2-\beta^2+2\phi\nu\beta^{'}-\nu^2\beta^{'^2}]}{2\alpha^2}
\end{gather}
Using expression for the partition function $Z$, we derive
\begin{gather}
\log Z = \frac{\nu^2\beta^{'^2}}{2\alpha^2}+\frac{1}{2}\log(2\pi\alpha^2)
\end{gather}
and hence (\ref{final}) can be written as
\begin{gather}
\label{final22}
-\mathbb{E}[\log[p(\phi)]] =  \Big(\frac{\alpha^2+\beta^2}{2\alpha^2}\Big)+\frac{1}{2}\log(2\pi\alpha^2).
\end{gather}
\subsubsection{Capacity}
 In this subsection, we derive a formula for the rate of the considered NOMA-VLC channel. First, we note that from the principle of maximum entropy distributions, and inspection of (\ref{pdf_psi}), 
as well as noting only first and second order terms in the exponent, one can conclude that for a given mean $\mathbb{E}[\phi]=0$, and $\mathbb{E}[\phi^2]=\alpha^2+\beta^2$, $p(\phi)$ is an entropy maximizing distribution \cite{cover2012elements}. To this effect, considering two random variables $\mathcal{Y}$ and $\mathcal{Z}$ in the considered setup, which are related by the following equation 
 \begin{gather}
 \mathcal{Y}=\mathcal{Z}+\mathcal{P}_{1}+\mathcal{P}_{2}
 \end{gather} 
 where $\mathcal{Z}$ is a Gaussian term consisting of the desired user's signal of power $P^{(u)}$, $\mathcal{P}_{1}$ is the sum of a suitable equivalent Gaussian ($\mathcal{P}_{1}$ that accounts for the superposition of signals $l>u$ and AWGN) and $\mathcal{P}_{2}$ is a centered normalized Chi-squared random variable ($\mathcal{P}_{2}$).
 Based on this, the expression for the rate can be written as
 \begin{gather}
 \label{pas1}
 R = \max_{\mathbb{E}[|\mathcal{Z}|^2]\leq P_{tot},\text{  }\mathbb{E}[\mathcal{Y}]=0} H(\mathcal{Y}) - H(\mathcal{Y}|\mathcal{Z})
 \end{gather}
 where $H(\cdot)$ denotes the corresponding Shannon entropy, and $P_{tot}$ denotes the total power budget. Hence, (\ref{pas1}) can be equivalently expressed as 
 \begin{gather}
R = \max_{\mathbb{E}[|\mathcal{Z}|^2]\leq P_{tot},\text{  }\mathbb{E}[\mathcal{Y}]=0} H(\mathcal{Y}) - H(\mathcal{Z}+\mathcal{P}_{1}+\mathcal{P}_{2}|\mathcal{Z})
\end{gather}
and
\begin{gather}
R = \max_{\mathbb{E}[|\mathcal{Z}|^2]\leq P_{tot},\text{  }\mathbb{E}[\mathcal{Y}]=0} H(\mathcal{Y}) - H(\mathcal{P}_{1}+\mathcal{P}_{2}|\mathcal{Z}).
\end{gather}
Based on this and since $\mathcal{P}$ and $\mathcal{Z}$ are statistically independent, it follows that
\begin{gather}
R = \max_{\mathbb{E}[|\mathcal{Z}|^2]\leq P_{tot},\text{  }\mathbb{E}[\mathcal{Y}]=0} H(\mathcal{Y}) - H(\mathcal{P}_{1}+\mathcal{P}_{2}).
\end{gather}
{Applying the central limit theorem \footnote{{It is worth noting that assuming a large number of users, the distribution \textit{tends} to a Gaussian. In case of less users, the Gaussian assumption provides a lower-bound on the rate (since given a covariance matrix, a Gaussian random variable maximizes differential entropy) which becomes tighter with an increase in number of users.}}, and assuming $\mathcal{Y}$ to be sum of Gaussian random variables with power $P^{(u)}$ (for the desired user) added with interference terms $(l>u)$ from the subsequent layers of SIC), and Chi-squared distribution, we can write $H(\mathcal{Y})$ as}
\begin{gather}
\label{s1}
H(\mathcal{Y}) = \frac{1}{2}\log(2\pi\Bigg[P^{(u)}+\sum\limits_{l>u}P^{(l)}+
\frac{\alpha^2}{|h^{(u)}|^2}\Bigg])+
\frac{1}{2}+\frac{\beta^2}{P^{(u)}+\sum\limits_{l>u}P^{(l)}+\frac{\alpha^2}{|h^{(u)}|^2}}
\end{gather}
Similarly,
\begin{gather}
\label{s2}
H(\mathcal{P}) = \frac{1}{2}\log(2\pi\Bigg[\sum\limits_{l>u}P^{(l)}+
\frac{\alpha^2}{|h^{(u)}|^2}\Bigg])+\frac{1}{2}+\frac{\beta^2}{\sum\limits_{l>u}P^{(l)}+
	\frac{\alpha^2}{|h^{(u)}|^2}}
\end{gather}
Therefore
\begin{gather}
\label{final_rate}
R^{(u)}= \frac{1}{2}\log(1+\frac{P^{(u)}}{\sum\limits_{l>u}P^{(l)}+
	\frac{\alpha^2}{|h^{(u)}|^2}})-\frac{\beta^2 P^{(u)}}{\Big(\sum\limits_{l>u}P^{(l)}+
	\frac{\alpha^2}{|h^{(u)}|^2}\Big)\Big(P^{(u)}+\sum\limits_{l>u}P^{(l)}+
	\frac{\alpha^2}{|h^{(u)}|^2}\Big)}
\end{gather}
\normalsize
{This indicates that the rate of each user is lowered due to the influence of ambient light. However, if $\beta<<\alpha$ after the application of a blue-filter, it can be inferred that $R^{(u)}\approx R_{{sh}}^{(u)}$ (where $R_{{sh}}^{(u)}$ is defined in (\ref{ab1})), which proves the result.} 
\end{proof}
%The resultant distribution of the additive distortion can be calculated as a convolution of $p_{z}(\cdot)$ and $p_{\alpha}(\cdot)$.
\subsection{Capacity for scenario with user mobility}
\label{rateq}
In this section, we quantify the rate for each user under the assumption of user mobility. For quantification of capacity in the presence of user mobility, we seek to take the expected value of (\ref{final_rate}) with respect to $p(h^{(u)})$. The expected capacity, say $\rho^{(u)}$ is expressed as
\begin{gather}
\rho^{(u)} = \mathbb{E}_{h}[R^{(u)}].
\end{gather}
We also note that (\ref{final_rate}) can be expanded as
\begin{gather}
\label{ag}
R^{(u)}=\frac{1}{2}\log(|h^{(u)}|^2(\sum\limits_{l\geq u}P^{(l)})+\alpha^2)-\frac{1}{2}\log(|h^{(u)}|^2(\sum\limits_{l>u}P^{(l)})+\alpha^2)\\ \nonumber
-\beta^2\Bigg[\frac{|h^{(u)}|^2}{\sum\limits_{l>u}|h^{(u)}|^2P^{(l)}+\alpha^2}-\frac{|h^{(u)}|^2}{\sum\limits_{l\geq u}|h^{(u)}|^2P^{(l)}+\alpha^2}\Bigg].
\end{gather}
We first prove the following two lemmas for carrying out the analysis
\begin{lemma}\label{app3}
\begin{gather}
\label{lem1}
\mathcal{I}(x,\theta,b,c)=\int x^{-\theta} \log\small(bx+c) dx=\frac{x^{1-\theta}(\pFq{2}{1}{1,1-\theta}{2-\theta}{-\frac{bx}{c}}-(\theta-1)\log\small(bx+c)-1)}{(\theta-1)^2}
\end{gather}
\end{lemma}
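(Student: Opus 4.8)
The plan is to establish this antiderivative by integration by parts followed by a hypergeometric-series identification. First I would set $u=\log(bx+c)$ and $dv=x^{-\theta}\,dx$, so that $du=\frac{b}{bx+c}\,dx$ and $v=\frac{x^{1-\theta}}{1-\theta}$. Integration by parts then gives $\mathcal{I}=\frac{x^{1-\theta}}{1-\theta}\log(bx+c)-\frac{b}{1-\theta}\int\frac{x^{1-\theta}}{bx+c}\,dx$. Pulling out the common factor $(\theta-1)^{2}$ and using $\frac{1}{1-\theta}=\frac{-(\theta-1)}{(\theta-1)^{2}}$, the boundary term already reproduces the $-(\theta-1)\log(bx+c)$ contribution inside the claimed expression, so the remaining task is to evaluate the residual integral $J:=\int\frac{x^{1-\theta}}{bx+c}\,dx$ and show that it supplies both the ${}_2F_1$ factor and the free constant $-1$.

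Next I would expand $\frac{1}{bx+c}=\frac{1}{c}\sum_{k\ge 0}(-bx/c)^{k}$ and integrate term by term, obtaining $J=\frac{x^{2-\theta}}{c}\sum_{k\ge 0}\frac{(-bx/c)^{k}}{2-\theta+k}$. Invoking the Pochhammer telescoping identity $\frac{1}{a+k}=\frac{(a)_k}{a\,(a+1)_k}$ with $a=2-\theta$ collapses this sum into a Gauss hypergeometric series, yielding $J=\frac{x^{2-\theta}}{c\,(2-\theta)}\,{}_2F_1(1,2-\theta;3-\theta;-bx/c)$. Writing $z=-bx/c$ so that $\frac{b x^{2-\theta}}{c}=-z\,x^{1-\theta}$ converts the prefactor of $J$ into $x^{1-\theta}$, so every surviving term carries this common power.

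The main obstacle is the final bookkeeping step, since the series produced above is ${}_2F_1(1,2-\theta;3-\theta;z)$, whereas the statement requires ${}_2F_1(1,1-\theta;2-\theta;z)$ together with the constant $-1$. To bridge the two I would prove the contiguous relation ${}_2F_1(1,1-\theta;2-\theta;z)-1=\frac{(1-\theta)z}{2-\theta}\,{}_2F_1(1,2-\theta;3-\theta;z)$ by matching coefficients of $z^{k}$: the coefficient $\frac{(1-\theta)_k}{(2-\theta)_k}$ telescopes to $\frac{1-\theta}{k+1-\theta}$, while the shifted series on the right telescopes to the same value after the index shift $k\mapsto k-1$. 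Substituting this identity back collects $J$ exactly into $\frac{x^{1-\theta}}{(\theta-1)^{2}}\big({}_2F_1(1,1-\theta;2-\theta;-bx/c)-1\big)$, which together with the logarithmic term reproduces the claimed closed form.

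As an independent confirmation I would also verify the result by differentiation: differentiating the right-hand side and using $\frac{d}{dz}{}_2F_1(a,b;c;z)=\frac{ab}{c}{}_2F_1(a+1,b+1;c+1;z)$ reduces the requirement to the single contiguous relation $xF'=(\theta-1)F-\frac{(\theta-1)c}{bx+c}$ for $F={}_2F_1(1,1-\theta;2-\theta;-bx/c)$, which follows from the same telescoping together with the geometric sum $\sum_{k\ge0}z^{k}=\frac{c}{bx+c}$; this shows directly that the derivative of the candidate antiderivative equals $x^{-\theta}\log(bx+c)$. All series manipulations converge on the range $|bx/c|<1$, and the identity then extends to the full domain by analytic continuation of ${}_2F_1$.
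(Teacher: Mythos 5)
Your proof is correct, and it shares the paper's first step --- integration by parts with $u=\log(bx+c)$ and $dv=x^{-\theta}\,dx$ --- but your handling of the residual integral $\int \frac{x^{1-\theta}}{bx+c}\,dx$ is genuinely different from the paper's. The paper first splits $\frac{bx}{bx+c}=1-\frac{c}{bx+c}$, so that the constant $-1$ in the final bracket comes from integrating the plain power $x^{-\theta}$, then identifies $(1+bx/c)^{-1}$ as ${}_1F_0(1;-bx/c)$ and invokes the tabulated Wolfram-functions identity (07.19.21.0002.01) to land directly on $\frac{x^{1-\theta}}{1-\theta}\,{}_2F_1(1,1-\theta;2-\theta;-bx/c)$. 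You instead expand the geometric series and integrate term by term, which produces the contiguous function ${}_2F_1(1,2-\theta;3-\theta;-bx/c)$, and then prove by coefficient matching the relation ${}_2F_1(1,1-\theta;2-\theta;z)-1=\frac{(1-\theta)z}{2-\theta}\,{}_2F_1(1,2-\theta;3-\theta;z)$, so in your version the $-1$ emerges from this contiguous relation rather than from an algebraic split. The trade-off: the paper's route is shorter but leans on a cited table entry (and its intermediate displays are somewhat garbled, with $x$-dependent factors appearing to be pulled outside integrals), whereas yours is fully self-contained, with every identity established by elementary Pochhammer telescoping; your closing verification by differentiation is a further independent check that the paper does not have, and it also addresses the convergence range $|bx/c|<1$ and subsequent analytic continuation, a point both series-based arguments implicitly need.
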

\begin{proof}
Integrating by parts, yields
%\begin{gather*}
%\mathcal{I}(x,\theta,b,c)=\log(bx+c)\int x^{-\theta} dx - \int \frac{b}{bx+c}\times \frac{x^{-\theta+1}}{-\theta+1} dx
%\end{gather*}
%or
\begin{gather}
\mathcal{I}(x,\theta,b,c)=\log(bx+c)\frac {x^{-\theta+1}}{-\theta+1} - \int \frac{b}{bx+c}\frac{x^{1-\theta}}{-\theta+1} dx
\end{gather}
which can be alternatively expressed as %\begin{gather*}
%\mathcal{I}=\log(bx+c)\frac {x^{-\theta+1}}{-\theta+1} - \int \Big(1-\frac{c}{bx+c}\Big)\times \frac{x^{-\theta}}{-\theta+1} dx
%\end{gather*}
%or
\begin{gather}
\mathcal{I}(x,\theta,b,c)=\log(bx+c)\frac {x^{-\theta+1}}{-\theta+1} - \frac{x^{-\theta+1}}{(-\theta+1)^2}\int \ {x^{-\theta}}\Big(1+\frac{bx}{c}\Big)^{-1} dx
\end{gather}
or
\begin{gather}
\mathcal{I}(x,\theta,b,c)=\log(bx+c)\frac {x^{-\theta+1}}{-\theta+1} - \frac{x^{-\theta+1}}{(-\theta+1)^2}+\frac{1}{1-\theta}\int \ {x^{-\theta}}\Big(1+\frac{bx}{c}\Big)^{-1} dx
\end{gather}
%or
%\begin{gather*}
%\mathcal{I}(x,\theta,b,c)=\log(bx+c)\frac {x^{-\theta+1}}{-\theta+1} - \frac{x^{-\theta+1}}{(-\theta+1)^2}\\\nonumber+%\frac{1}{1-\theta}\int \ {x^{-\theta}}\Big(1+\frac{bx}{c}\Big)^{-1} dx
%\end{gather*}
and
\begin{gather}
\mathcal{I}(x,\theta,b,c)=\log(bx+c)\frac {x^{-\theta+1}}{-\theta+1} - \frac{x^{-\theta+1}}{(-\theta+1)^2}+\frac{1}{1-\theta}\int \ {x^{-\theta}}\pFq{1}{0}{1}{}{-\frac{bx}{c}} dx
\end{gather}
where $\pFq{p}{q}{}{}{}$ denotes the hypergeometric function.
Based on the above and with the aid of \cite[eq. (07.19.21.0002.01)]{bworld} we can infer
\begin{gather}
\mathcal{I}(x,\theta,b,c)=\log(bx+c)\frac {x^{-\theta+1}}{-\theta+1} - \frac{x^{-\theta+1}}{(-\theta+1)^2}+\frac{x^{-\theta+1}}{(1-\theta)^2} \pFq{2}{1}{1-\theta,1}{2-\theta}{-\frac{bx}{c}} dx
\end{gather}
which proves (\ref{lem1}), and completes the proof.
\end{proof}
\begin{lemma}\label{app4}
\begin{gather}
\mathcal{B}(x,\theta,k1,k2,\zeta) =\int \frac{\beta^2 x^{-\theta+1}}{\zeta^2} \Bigg[\pFq{1}{0}{1}{}{-\frac{k_{1}x}{\zeta^2}}
-\pFq{1}{0}{1}{}{-\frac{k_{2}x}{\zeta^2}}\Bigg]dx\\\nonumber
=\frac{\beta^{2}x^{-\theta+2}}{\zeta^{2}({-\theta+2})}\Bigg[\pFq{2}{1}{1,2-\theta}{3-\theta}{-\frac{k_{1}x}{\zeta^2}}-\pFq{2}{1}{1,2-\theta}{3-\theta}{-\frac{k_{2}x}{\zeta^2}}\Bigg]
\end{gather}
\end{lemma}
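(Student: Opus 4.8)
The plan is to reduce $\mathcal{B}$ to two copies of the integral already evaluated in the proof of Lemma~\ref{app3}. By linearity of integration, I would first write
\begin{gather}
\mathcal{B}(x,\theta,k_1,k_2,\zeta)=\frac{\beta^2}{\zeta^2}\int x^{-\theta+1}\pFq{1}{0}{1}{}{-\frac{k_1 x}{\zeta^2}}\,dx-\frac{\beta^2}{\zeta^2}\int x^{-\theta+1}\pFq{1}{0}{1}{}{-\frac{k_2 x}{\zeta^2}}\,dx,
\end{gather}
so that it suffices to evaluate a single generic integral of the form $\int x^{-\theta+1}\,{}_1F_0(1;;wx)\,dx$ and then to specialise $w$ to $-k_1/\zeta^2$ and to $-k_2/\zeta^2$.

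Next I would apply the same hypergeometric integration identity \cite[eq. (07.19.21.0002.01)]{bworld} that closes the proof of Lemma~\ref{app3}, namely $\int x^{s-1}\,{}_1F_0(a;;wx)\,dx=\frac{x^s}{s}\,{}_2F_1(a,s;s+1;wx)$. The only bookkeeping difference relative to Lemma~\ref{app3} is that the power of $x$ here is $x^{-\theta+1}=x^{(2-\theta)-1}$ rather than $x^{-\theta}$, so the exponent parameter becomes $s=2-\theta$ instead of $1-\theta$; with $a=1$ each term evaluates to $\frac{x^{-\theta+2}}{-\theta+2}\,{}_2F_1(1,2-\theta;3-\theta;wx)$, where I have used the symmetry of the two numerator parameters of ${}_2F_1$ to list them as $(1,2-\theta)$.

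Finally I would substitute $w=-k_1/\zeta^2$ and $w=-k_2/\zeta^2$, factor the common prefactor $\frac{\beta^2 x^{-\theta+2}}{\zeta^2(-\theta+2)}$ out of the difference, and collect the two ${}_2F_1$ terms inside the bracket; this returns exactly the claimed expression for $\mathcal{B}$. I do not anticipate a genuine obstacle: the statement is an immediate corollary of the integral already invoked in Lemma~\ref{app3}, and the only points requiring care are matching the shifted parameters ($1-\theta\mapsto 2-\theta$, $2-\theta\mapsto 3-\theta$) and noting that the antiderivative is valid for $\theta\neq 2$; in the degenerate case $\theta=2$ one has $s=0$, so the power rule must instead be replaced by a logarithmic primitive and that case would be treated separately.
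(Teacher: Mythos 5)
Your proof is correct and follows essentially the same route as the paper, whose entire argument is the single citation of \cite[eq. (07.19.21.0002.01)]{bworld} that you invoke; you simply make explicit the linearity splitting, the parameter shift $s=2-\theta$, and the substitution $w=-k_{i}/\zeta^{2}$. Your additional remark that the antiderivative degenerates at $\theta=2$ (requiring a logarithmic primitive) is a sound caveat the paper omits.
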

\begin{proof}
%or
%\begin{gather*}
%\mathcal{B}=\int \frac{\beta^2 x^{-\theta+1}}{\zeta^2} \Big[\pFq{1}{0}{1}{}{-\frac{k_{1}x}{\zeta^2}}
%-\Big[\pFq{1}{0}{1}{}{-\frac{k_{2}x}{\zeta^2}}\Big]dx\end{gather*}
The result follows directly from \cite[eq. (07.19.21.0002.01)]{bworld}.
%\begin{gather}
%\mathcal{B}(x,\theta,k1,k2,\zeta) = 
\end{proof}
Using these lemmas and expression for the ordered p.d.f given in (\ref{osc}), we can arrive at the following proposition.
\begin{prop}
Using Lemma \ref{app3} and Lemma \ref{app4}, and (\ref{ag}) one can find an expression for the mean rate for a given $\theta$ as $\rho^{(u)}_{1}(\theta)$ as follows
\begin{gather}
\label{srmobile}
\rho^{(u)}_{1}(\theta)=\int\limits_{h{\min}}^{h_{\max}}h^{(u)^{-\theta}}R^{(u)}dh^{(u)} \\ \nonumber
=\mathcal{I}(h_{\max}^2,\theta,\sum\limits_{l\geq u}P^{(l)},\alpha^2)-
\mathcal{I}(h_{\max}^2,\theta,\sum\limits_{l> u}P^{(l)},\alpha^2)
+\mathcal{B}(h_{\max}^2,\theta,\sum_{l> u}P^{(l)},\sum_{l \geq u}P^{(l)},\alpha^2)\\\nonumber-\mathcal{B}(h_{\min}^2,\theta,\sum_{l> u}P^{(l)},\sum_{l \geq u}P^{(l)},\alpha^2)
-[\mathcal{I}(h_{\min}^2,\theta,\sum\limits_{l\geq u}P^{(l)},\alpha^2)-
\mathcal{I}(h_{\min}^2,\theta,\sum\limits_{l> u}P^{(l)},\alpha^2)]
\end{gather}
Using this expression, along with (\ref{osc}), we have the following expression for the rate for the $u^{th}$ user 
\begin{gather}
\rho^{(u)}=\frac{U!K}{(u-1)!(U-u)!}\\\nonumber
\sum_{i=0}^{u-1}\sum_{j=0}^{U-u}{{u-1}\choose i}{{U-u}\choose j} (-f_{1})^{i}(-1)^{U-j-i-1}
\Big(1+f_{1}\Big)^{j}\Big(f_{2}^{U-j-i-1}\rho^{(u)}_{1}\Big(\frac{1}{m+3}({U-j-i})+1\Big)\Big).
\end{gather}
Hence, the sum rate for the scenario with user mobility and ambient light interference can be given as
\begin{gather}\label{sr_mob}
\sum_{u=1}^{U}\rho^{(u)}.
\end{gather}
\end{prop}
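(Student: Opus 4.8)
The plan is to compute $\rho^{(u)} = \mathbb{E}_h[R^{(u)}]$ in two stages: first evaluate the auxiliary weighted integral $\rho^{(u)}_{1}(\theta)$ against a pure power weight $h^{-\theta}$, and then recover $\rho^{(u)}$ by expanding the ordered-statistics density (\ref{osc}) as a finite linear combination of such power weights. The natural starting point is the expanded rate (\ref{ag}), which decomposes $R^{(u)}$ into two logarithmic differences and the $\beta^2$-bracket of two rational terms; these two pieces are handled by Lemma \ref{app3} and Lemma \ref{app4} respectively.

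First I would establish (\ref{srmobile}). The substitution $x = |h^{(u)}|^2$ is the organizing device: under it each logarithm $\tfrac12\log(|h^{(u)}|^2 \sum_{l} P^{(l)} + \alpha^2)$ becomes $\tfrac12\log(bx + \alpha^2)$, and the power weight together with the Jacobian collapses into a pure $x^{-\theta}$ factor, so that each logarithmic integral is exactly the antiderivative $\mathcal{I}(x,\theta,b,c)$ of Lemma \ref{app3} evaluated between the endpoints $x = h_{\min}^2$ and $x = h_{\max}^2$. This produces the two $\mathcal{I}(h_{\max}^2,\cdot)-\mathcal{I}(h_{\min}^2,\cdot)$ contributions in (\ref{srmobile}). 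The rational terms in the $\beta^2$-bracket are each of the form $x/(bx+\alpha^2)$; since the numerator $|h^{(u)}|^2 = x$ raises the power by one and the denominator can be written as $(1+bx/\alpha^2)^{-1} = {}_1F_0(1;;-bx/\alpha^2)$, the pair of rational integrals matches the exact shape of Lemma \ref{app4} with $k_1 = \sum_{l>u}P^{(l)}$ and $k_2 = \sum_{l\geq u}P^{(l)}$, yielding the $\mathcal{B}(h_{\max}^2,\cdot)-\mathcal{B}(h_{\min}^2,\cdot)$ contribution. Collecting the four endpoint evaluations gives (\ref{srmobile}).

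Next I would assemble $\rho^{(u)}$. Substituting the density (\ref{osc}) into $\int p_{(u)}(h) R^{(u)}\,dh$ turns the expectation into a finite double sum whose $(i,j)$ term carries the monomial weight $h^{-\frac{2}{m+3}(U-j-i)-1}$ multiplying $R^{(u)}$. Each such term is precisely $\rho^{(u)}_{1}$ evaluated at the exponent dictated by the same $x=|h^{(u)}|^2$ change of variable, namely $\theta = \frac{1}{m+3}(U-j-i)+1$, once the Jacobian halves the $h$-exponent; folding in the binomial coefficients and the constants $f_{1}$, $f_{2}$, $K$ from (\ref{osc}) reproduces the stated double-sum formula for $\rho^{(u)}$. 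The sum rate (\ref{sr_mob}) then follows immediately by summing over $u$.

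The main obstacle I anticipate is the exponent bookkeeping across the substitution $x = |h^{(u)}|^2$: one must track the factor of two relating the $h$-exponent in (\ref{osc}) to the effective $x$-exponent, together with the $\tfrac12 x^{-1/2}$ Jacobian and the $\tfrac12$ multiplying each logarithm, so that every integral lands exactly on the forms required by Lemmas \ref{app3} and \ref{app4} and the argument of $\rho^{(u)}_{1}$ emerges as $\frac{1}{m+3}(U-j-i)+1$ rather than $\frac{2}{m+3}(U-j-i)+1$. A secondary point is to confirm that the hypergeometric antiderivatives remain valid on the bounded support $[h_{\min},h_{\max}]$, so that the endpoint evaluations are finite and the interchange of the finite summation with integration is justified.
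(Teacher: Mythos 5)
Your proposal is correct and follows essentially the same route as the paper, which itself offers no proof beyond invoking Lemmas \ref{app3} and \ref{app4} together with the ordered p.d.f. (\ref{osc}): decompose $R^{(u)}$ via (\ref{ag}), integrate the logarithmic and rational parts with Lemmas \ref{app3} and \ref{app4} respectively under the substitution $x=|h^{(u)}|^{2}$, then expand (\ref{osc}) into pure power weights and sum over users to obtain (\ref{sr_mob}). Your explicit tracking of the factor of two relating the $h$-exponent in (\ref{osc}) to the $x$-exponent passed to $\rho^{(u)}_{1}$ (so that $\frac{1}{m+3}(U-j-i)+1$ rather than $\frac{2}{m+3}(U-j-i)+1$ appears) is, if anything, more careful than the paper's own telegraphic presentation.
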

\subsection{Power allocation for static users}
In this section, we derive a technique for power allocation based on the derived capacity. First, it is assumed that each user has a QoS requirement $R_{th}^{(u)}$, and there is an overall total power budget of $P_{tot}$. Let us further denote the rate-QoS constraint without 
ambient light, $\mathcal{R}^{(u)}$, as
\begin{gather}
\mathcal{R}^{(u)}=\frac{1}{2}\log(1+\frac{P^{(u)}}{\sum\limits_{l>u}P^{(l)}+
	\frac{\alpha^2}{|h^{(u)}|^2}})
\end{gather}
Under these conditions, we seek to optimize the following
\begin{equation*}
\begin{aligned}
& \underset{\textbf{p}}{\text{maximize}_{}}
& & \sum_{\forall u} R^{(u)}\\
& \text{subject to}
& & \mathcal{R}^{(u)} \geq R_{th}^{(u)}, \; i = 1, \ldots, m.\\
& & &\textbf{1}^{T}\textbf{p}=P_{tot}.
\end{aligned}
\end{equation*}
where $\textbf{p}$ denotes the vector of powers $P^{(u)}$ of each user, and $R^{(u)}$ is given as
\begin{gather}
\label{rran}
R^{(u)}= \frac{1}{2}\log(1+\frac{P^{(u)}}{\sum\limits_{l>u}P^{(l)}+
	\frac{\alpha^2}{|h^{(u)}|^2}})-\frac{\beta^2 P^{(u)}}{\Big(\sum\limits_{l>u}P^{(l)}+
	\frac{\alpha^2}{|h^{(u)}|^2}\Big)\Big(P^{(u)}+\sum\limits_{l>u}P^{(l)}+
	\frac{\alpha^2}{|h^{(u)}|^2}\Big)}
\end{gather}
It is worthwhile to note that the rate-QoS constraint is mentioned in this formulation without taking ambient light into consideration, since it is assumed that the ``naive" system designer would rely on the fact that most ambient light is supressed by a blue filter. Further, since $R^{(u)}-\mathcal{R}^{(u)}$ is bounded, the maximum difference is taken care by introduction of slack variables in the optimization problem (discussed below). 

Further, assuming $P^{(u)}>>\Big(\sum\limits_{l>u}P^{(l)}+
	\frac{\alpha^2}{|h^{(u)}|^2}\Big)$, $R^{(u)}$ can be approximated as
\begin{gather}
\label{rateapprox}
R^{(u)}\approx \frac{1}{2}\log(1+\frac{P^{(u)}}{\sum\limits_{l>u}P^{(l)}+
	\frac{\alpha^2}{|h^{(u)}|^2}})-\frac{\beta^2 }{\Big(\sum\limits_{l>u}P^{(l)}+
	\frac{\alpha^2}{|h^{(u)}|^2}\Big)}
\end{gather}
Let us denote the interference and noise power for each user as $$I^{(u)}=\Big(\sum\limits_{l>u}P^{(l)}+
	\frac{\alpha^2}{|h^{(u)}|^2}\Big)$$
Further, taking the derivative with respect to $P^{(u)}$, we get,
\begin{gather}
\frac{\partial \sum\limits_{\forall u}R^{(u)}}{{\partial P^{(u)}}}=\frac{1}{2\Big(P^{(u)}+\sum\limits_{l>u}P^{(l)}+\frac{\alpha^{2}}{|h^{(u)}|^2}\Big)}-\sum\limits_{\forall q<u}\Bigg[\frac{1}{2\Big(\sum\limits_{l>q}P^{(l)}+\frac{\alpha^2}{|h^{(q)|^2}}\Big)}-\frac{\beta^2}{\Big(\sum\limits_{l>q}P^{(l)}+\frac{\alpha^2}{|h^{(q)}|^2}\Big)^2}\Bigg]
\end{gather}
or
\begin{gather}
\frac{\partial \sum\limits_{\forall u}R^{(u)}}{{\partial P^{(u)}}}=\frac{1}{2\Big(P^{(u)}+I^{(u)}\Big)}-\sum\limits_{\forall q<u}\Bigg[\frac{1}{2 I^{(q)}}-\frac{\beta^2}{I^{(q)^2}}\Bigg]
\end{gather}
Equating the derivative to zero, we get
\begin{gather}
\label{grad}
\frac{P^{(u)}}{I^{(u)}}+1=\frac{\frac{2}{I^{(u)}}}{\sum\limits_{\forall q<u} \Big[\frac{1}{I^{(q)}}-\frac{\beta^2}{I^{(q)^2}}\Big]}
\end{gather}
Next, for simplicity, we solve the equivalent relaxed equality constrained problem given as follows
\begin{equation}
\label{cop}
\begin{aligned}
& \underset{\textbf{p}}{\text{maximize}_{}}
& & \sum_{\forall u} R^{(u)}\\
& \text{subject to}
& &\mathcal{R}^{(u)} = R_{th}^{(u)}+\eta^{(u)}, \; i = 1, \ldots, m.\\
& & &\textbf{1}^{T}\textbf{p}=P_{tot}.\\
& & & \eta^{(u)}\geq 0
\end{aligned}
\end{equation}
where $\eta^{(u)}$ denotes a slack variable, and $R_{th}^{(u)}$ denotes the rate-QoS threshold.
Let us denote $$\kappa^{(u)} = 2^{R_{th}^{(u)}+\eta^{(u)}}-1 = \frac{P^{(u)}}{I^{(u)}}$$
Thus, we can rewrite, (\ref{grad}) as
\begin{gather*}
2^{R^{th}+\eta^{(u)}}=\frac{\frac{(2^{R_{th}^{(u)}+\eta^{(u)}}-1)}{2P^{(u)}}}{{\sum\limits_{\forall q<u}\Big[\frac{(2^{R_{th}^{(q)}+\eta^{(q)}}-1)}{2P^{(q)}}-\frac{\beta^2({2^{(R^{th}+\eta^{(q)})}-1)^2}}{P^{(q)^2}}}\Big]}
\end{gather*}
Rearranging, we get
\begin{gather}
\label{pafinal}
P^{(u)}=\frac{\frac{(2^{R_{th}^{(u)}+\eta^{(u)}}-1)}{2^{R_{th}^{(u)}+\eta^{(u)}+1}}}{{\sum\limits_{\forall q<u}\Big[\frac{2^{R_{th}^{(q)}+\eta^{(q)}}-1}{2 P^{(q)}}-\frac{\beta^2(2^{R_{th}^{(q)}+\eta^{(q)}}-1)^2}{P^{(q)^2}}}\Big]}
\end{gather}
This equation provides a recursion for allocating power to all the users. First the users are ordered in ascending power of channel-gains. Next, (\ref{pafinal}) is used for allocating power to users with a given rate based QoS.

Next, power allocation is performed assuming $\eta^{(u)}=0$. It will be trivial to note that for $\eta^{(u)}>0$.
\begin{gather}
\label{pafinal1}
P^{(u)}=\frac{\frac{(2^{R_{th}^{(u)}}-1)}{2^{R_{th}^{(u)}+1}}}{{\sum\limits_{\forall q<u}\Big[\frac{2^{R_{th}^{(q)}}-1}{2P^{(q)}}-\frac{\beta^2(2^{R_{th}^{(q)}}-1)^2}{P^{(q)^2}}}\Big]}+\Omega^{(u)}
\end{gather}
for some variable $\Omega^{(u)}$.
Next we project $\textbf{p}$ on the hyperplane $\textbf{1}^{T}\textbf{p}=P_{tot}$ as
\begin{gather} 
\textbf{p}^{\text{proj}} = \textbf{p}-(\textbf{1}^{T}\textbf{p}-P_{tot})\textbf{1}/U
\end{gather}
where $U$ denotes number of users and $\textbf{p}^{\text{proj}}$ denotes the power projected on the hyperplane $\textbf{1}^{T}\textbf{p}=P_{tot}$.

For the next iteration, $\Omega^{(u)}$ can be found by considering 
\begin{gather} 
\label{gafinal}
\textbf{p}^{\text{proj}} = (\textbf{p}+\mathbf{\Omega})-\Big(\textbf{1}^{T}(\textbf{p}+\mathbf{\Omega})-P_{tot}\Big)\frac{\textbf{1}}{U}
\end{gather}
or
\begin{gather}
\label{gafinal1}
\mathbf{\Omega}=\Big(\textbf{I}_{U}-\frac{\textbf{1}\textbf{1}^{T}}{U}\Big)^{-1}\Big[\textbf{p}^{\text{proj}}-\Big(\textbf{I}_{U}-\frac{\textbf{1}\textbf{1}^{T}}{U}\Big)\textbf{p}-P_{tot}\frac{\textbf{1}}{U}\Big]
\end{gather}
Using (\ref{pafinal}), (\ref{pafinal1}), (\ref{gafinal}), (\ref{gafinal1}), the proposed alternating projections based power-allocation technique is formulated in Algorithm \ref{Alg1}. 
\begin{algorithm}
\caption{Determine power allocation coefficients}
\begin{algorithmic}
\label{Alg1} 
\STATE \textbf{Initialize} $\mathbf{\Omega}=\mathbf{0}$, $\textbf{p}=\frac{\textbf{1}}{U}$, $\epsilon$.
\WHILE{$\|\textbf{p}-\textbf{p}^{\text{prev}}\|>\epsilon$}
\STATE $\textbf{p}^{\text{prev}}=\textbf{p}$.
\STATE Calculate $\textbf{p}$ using (\ref{pafinal1}).  
%\STATE $\textbf{p}:=\textbf{p}+\mathbf{\Omega}$.
\STATE Calculate $\textbf{p}^{\text{proj}}$ using (\ref{gafinal}).
\STATE Calculate $\mathbf{\Omega}$ using (\ref{gafinal1}).
\STATE Set $\textbf{p}:=[\textbf{p}^{\text{proj}}]_{+}$, where $[\cdot]_{+}=(\cdot)\times((\cdot)>0)$.
%\STATE Set $\mathbf{\Omega}:=[\mathbf{\Omega}]_{+}$.
\ENDWHILE
\end{algorithmic}
\end{algorithm}
Further, it can be noted that even though the original objective functions is non-convex, we find optima by alternating projections over a convex subset (simplices) defined by the equality constraints.
\subsection{Power allocation for channel with user mobility}
In this section, we describe power allocation for the scenario in presence of user mobility. From Section. \ref{rateq}, and the previous sections, we can identify that the original rate equations are highly  involved, and hence direct maximization of the sum-rate is complicated.

Hence, for simplicity, we consider the rate equation (\ref{rateapprox}), and replace it by an approximate rate,
$\hat{R}^{(u)}$, using Jensen's inequality such as (actually it is an upper bound on the rate)
\begin{gather} 
\hat{R}^{(u)}\approx \frac{1}{2}\log(1+\frac{P^{(u)}}{\sum\limits_{l>u}P^{(l)}+\frac{\alpha^2}{\mu_{h^2}}})-\frac{\beta^2}{\sum\limits_{l>u}P^{(l)}+\frac{\alpha^2}{\mu_{h^2}^2}} 
\end{gather}
where $\mu_{h^2}=\mathbb{E}[h^{{(u)}^2}]$. 
Next, we approximate the original rate constraint as,
$$\mathbb{E}\Big[\log_{2}\Big(1+\frac{P^{(u)}}{I^{(u)}}\Big)\Big]\approx
\mathbb{E}\Big[\log_{2}\Big(1+\frac{P^{(u)}}{\hat{I}^{(u)}}\Big)\Big]=R_{th}^{(u)}+\eta^{(u)}$$ wherein $$\hat{I}^{(u)}=\sum\limits_{l>u}P^{(l)}+\frac{\alpha^2}{\mu_{h^2}}$$
Using these approximations, we arrive at the following similar optimization problem
\begin{equation}
\label{copt1}
\begin{aligned}
& \underset{\textbf{p}}{\text{maximize}_{}}
& & \sum_{\forall u} \hat{R}^{(u)}\\
& \text{subject to}
& &\hat{\mathcal{R}}^{(u)} = R_{th}^{(u)}+\eta^{(u)}, \; i = 1, \ldots, m.\\
& & &\textbf{1}^{T}\textbf{p}=P_{tot}.\\
& & & \eta^{(u)}\geq 0
\end{aligned}
\end{equation}
where $\hat{\mathcal{R}}^{(u)} = \frac{1}{2}\log(1+\frac{P^{(u)}}{\sum\limits_{l>u}P^{(l)}+\frac{\alpha^2}{\mu_{h^2}}})$. Solving this problem, it is not difficult to conclude that we arrive at (\ref{pafinal}), (\ref{pafinal1}), (\ref{gafinal}), and (\ref{gafinal1}) once again, and Algorithm. \ref{Alg1} can be used as a solution to the optimization problem in (\ref{copt1}).

 In the next section, we present simulations to corroborate the results presented in this section, and in Theorem 1 and Theorem 2.
 \section{Results and Discussions}
In this section, we provide simulation results to highlight the validity of the analytical results presented in this work. {To this end, a room size of $5\text{m}\times5\text{m}\times3\text{m}$ was considered where the LED was located at a height of 2.25 metres. Four UEs considered at different random center-radii with respect to the LED transmitter are assumed to be mobile, with their corresponding channel gains given by (\ref{pdfeqn1})}. The modulation constellation is on-off keying (OOK), and the LED radiation pattern was assumed to be Lambertian. For power-allocation, generic gain-ratio power allocation \cite{marshoud2016non} was compared with the proposed power-allocation technique for performance evaluation.

In the first simulation setup, we validate the expressions for the p.d.f of $\phi$ presented in Theorem 1. The p.d.f of the additive distortion was estimated/simulated from the histogram of observations (the signal to noise ratio is greater than 10dB) in MATLAB using the ``histc" function. For illustration, the estimated p.d.f is shown in Fig. \ref{fig_0} for $\nu=10$. It can be observed that both approximations, namely: a) through Hermite polynomials (the approximation was constructed using only 10 Hermite polynomials), and b) from eq. (\ref{final1}) bear close overlap with the simulated histogram. In the considered setup $\beta$ is chosen as high as $\frac{\alpha}{3}$; yet it is found from the simulations that the derived analytical approximations presented in Theorem 1, well-model the simulated p.d.f.  

Next, in Fig. \ref{fig_2}, and Fig. \ref{fig_3}, the per-user capacity {for the static NOMA-VLC setup was simulated and compared with the derived formula for rate for LED half angles of $50^{\circ}$, and $60^{\circ}$ respectively}, for each user (termed as User-I, User-II, User-III and User-IV). It can be observed that the formula for capacity derived in (\ref{ag})  closely models the simulated per-user capacity. Furthermore, it is also  observed that the proposed power allocation achieves higher capacity for the ``premium" user as compared to classical gain ratio power allocation, whilst satisfying the rate-QoS constraint of each user for a  given power budget. Without loss of generality, the four users' rate-QoS is chosen as 0.2 bits per channel use (bpcu), 0.6 bpcu, 2 bpcu, and 5 bpcu ({please note that according to recent advances \cite{vaezi2018non}, simply allocating higher power to users with lower channel strengths does not guarantee the success of NOMA, rather it is the chosen rate-QoS points that make the case for NOMA in terms of sum-rate}).

%For sake of completeness we oultine the simulated average BER performance in Fig. \ref{fig_2}, wherein it is observed that the average BER performance is worse with imperfect channel-estimates as compared to BER with perfect CSI. However, it can be inferred that upon refinement of imperfect CSI using MMSE based channel-estimation, the BER performance is closer to the perfect CSI scenario, which ratifies the viability of  MMSE based channel-estimation in low ambient light (i.e. $\beta<<\alpha$). 

Additionally, in Fig. \ref{fig_6}, Fig. \ref{fig_7}, Fig. \ref{fig_8} and Fig. \ref{fig_9} (which correspond to LED half angles of $50^{\circ}$ and $60^{\circ}$), we simulate the per-user capacity for the NOMA-VLC scenario impaired by user-mobility, wherein, each user was assumed to be mobile, with various values of $h_{\min}$ and $h_{\max}$ (which basically determine the extent of mobility/variability) \cite{yin2016performance}. In this scenario, the per-user capacity was simulated first, and validated with the corresponding analytical expression (\ref{srmobile}). It can be observed that the analytical expressions for per-user capacity bear a close match with simulations for all the considered users, which ratifies the presented analysis. 

Furthermore, the overall sum-rate for half angle of $50^{\circ}$ are plotted in Fig. \ref{fig_10} and Fig. \ref{fig_11} for both static and user-mobility scenarios respectively. It can be observed that there is a close agreement between analytical and simulated sum-rate curves dictated by (\ref{sr_mob}). Additionally, it can also be observed that the proposed power allocation minimizes the gap between the achieved sum-rate and the sum-rate for AWGN channel ({where there is no ambient light}), which makes the proposed power-allocation method viable. Lastly, a comparison is also provided by considering a hypothetical case when we optimize the sum of SH capacity formula for each user in (\ref{cop}) and (\ref{copt1}) instead of the capacity-formula derived in (\ref{rran}) ({in other words, interference due to ambient light exists, and instead of using the capacity formula derived in (\ref{rran}), the classical Shannon-Hartley capacity formula is deployed as an objective function in this comparison, and performance is benchmarked with the proposal}). It can be observed that there is a noticeable degradation in sum-rate performance if the effect of ambient light is neglected from the derivation of power-allocation coefficients, which makes the case for the power-allocation algorithm proposed in this work.

Lastly, it is worthwhile to mention that it is possible for GRPA to outperform the proposed power allocation in terms of capacity for some of the users. This can be attributed to the fact that our proposal ensures fulfilment of rate-QoS of each user with a given power-budget, and sum-rate objective. Hence, for a given user, GRPA can sometimes give an incremental benefit in terms of individual rate, as it is an unconstrained algorithm which is totally based on individual users' channel gains, and not on their individual QoS requirement. However, from the above figures, performance gains can be noted in terms of the achieved overall sum-rate, and in terms of the rate achieved by the ``premium" user, for ambient light impaired VLC channels. 
\section{Conclusion}
In this work, two major performance-inhibiting factors of VLC, namely, a) additive interference due to ambient light, and b) user mobility were addressed. The p.d.f of the additive interference was characterized, and the analytically derived p.d.f matches with the simulated p.d.f. Furthermore, a formula for channel-capacity is derived, and in cases with negligible ambient light, the derived formula for capacity is identical to the SH formula for AWGN channels. Further, an analytical expression for capacity was derived for user-mobility impaired NOMA-VLC channels, and verified by extensive simulations over realistic VLC scenarios. Lastly, an alternating projections based power allocation technique is derived for both static and user-mobility impaired NOMA-VLC scenarios with ambient lighting. From simulations, the proposed power allocation technique is found to achieve higher sum-rate as compared to classical gain-ratio power allocation, which makes it suitable for practical NOMA-VLC deployments.   
%This implies that:
%\begin{gather}
%\mathbb{E}[\sum_{m=1}^{M}|\tilde{\zeta}_{m,*}|^2]\leq M\frac{\eta_{1}{\zeta}_{\max,*}C_{1}S_{2}}{2S_{1}}
%\end{gather}
%Towards convergence, it can be assumed that $S_{1}>>S_{2}$, hence the bound can be stated as:
%\begin{gather}
%0<\eta_{1}<\frac{2}{C_{1}\beta_{\text{max},*}}
%\end{gather}
%where $\beta_{\text{max},*}$ is the maximum fusion weight at steady state.
%\bibliographystyle{IEEEtran}
%\bibliography{./paper}
% if have a single appendix:
%\appendix[Proof of Theorem 1]
%\appendix[Proof of Theorem 2]
% or
%\appendix  % for no appendix heading
% do not use \section anymore after \appendix, only \section*
% is possibly needed

% use appendices with more than one appendix
% then use \section to start each appendix
% you must declare a \section before using any
% \subsection or using \label (\appendices by itself
% starts a section numbered zero.)
%

% use section* for acknowledgment
% Can use something like this to put references on a page
% by themselves when using endfloat and the captionsoff option.
\ifCLASSOPTIONcaptionsoff
  \newpage
\fi

\bibliography{./paper}
\bibliographystyle{IEEEtran}

\begin{figure}[!htbp]
  \centering
  % Requires \usepackage{graphicx}
  \includegraphics[width=\linewidth,height=13cm]{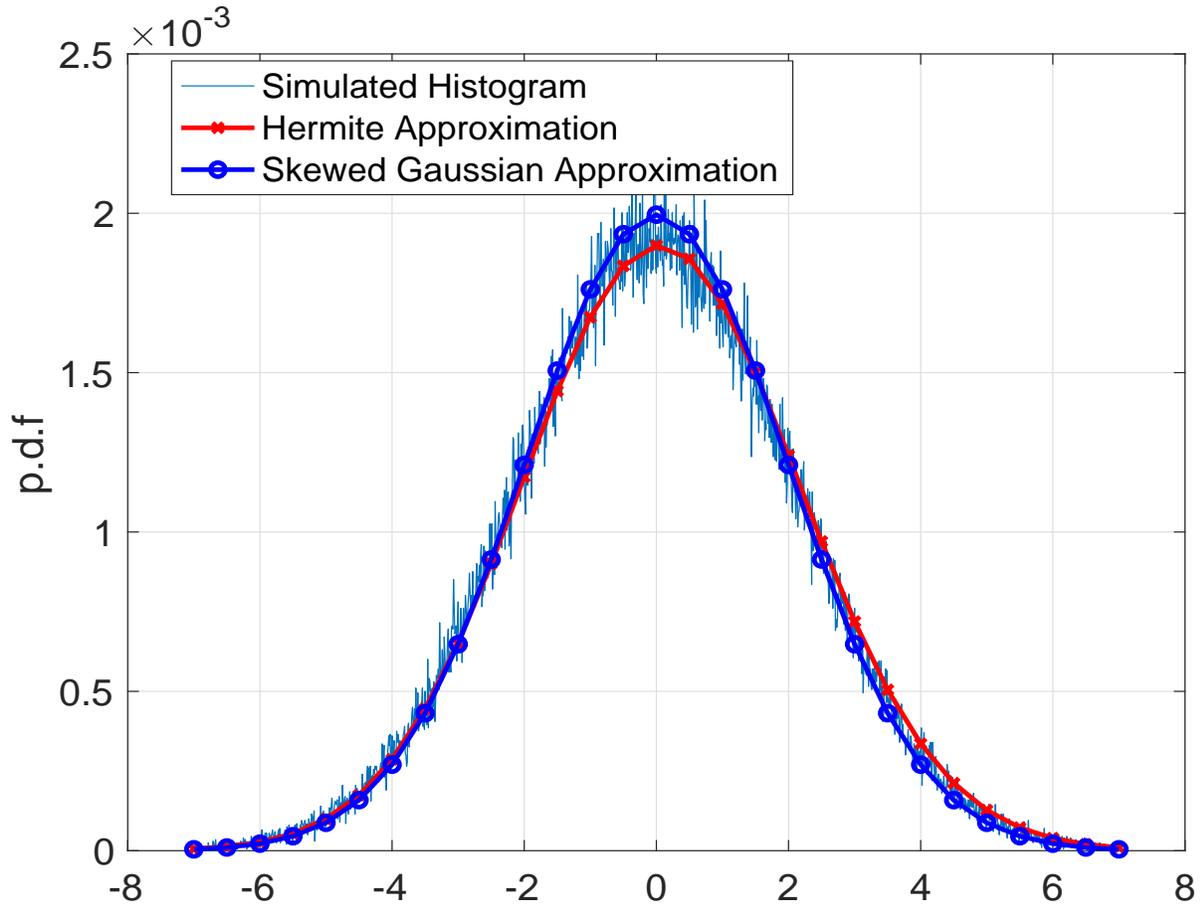}\\
  \caption{Validation of approximate expressions for p.d.f of the additive distortion consisting of AWGN and ambient light $\beta=\frac{\alpha}{3},\nu=10,\alpha=2$..}\label{fig_0}
\end{figure}
%\begin{figure}
 % \centering
  % Requires \usepackage{graphicx}
 % \includegraphics[width=\textwidth,height=13cm]{40_fig.eps}\
%  \caption{Validation of derived rate-formula for $\beta=\frac{\alpha}{3},\nu=10,\alpha=2$ for static users for $\Theta_{\frac{1}{2}}=40^{\circ}$.}\label{fig_1}
%\end{figure}
\begin{figure}
  \centering
  % Requires \usepackage{graphicx}
  \includegraphics[width=\linewidth,height=13cm]{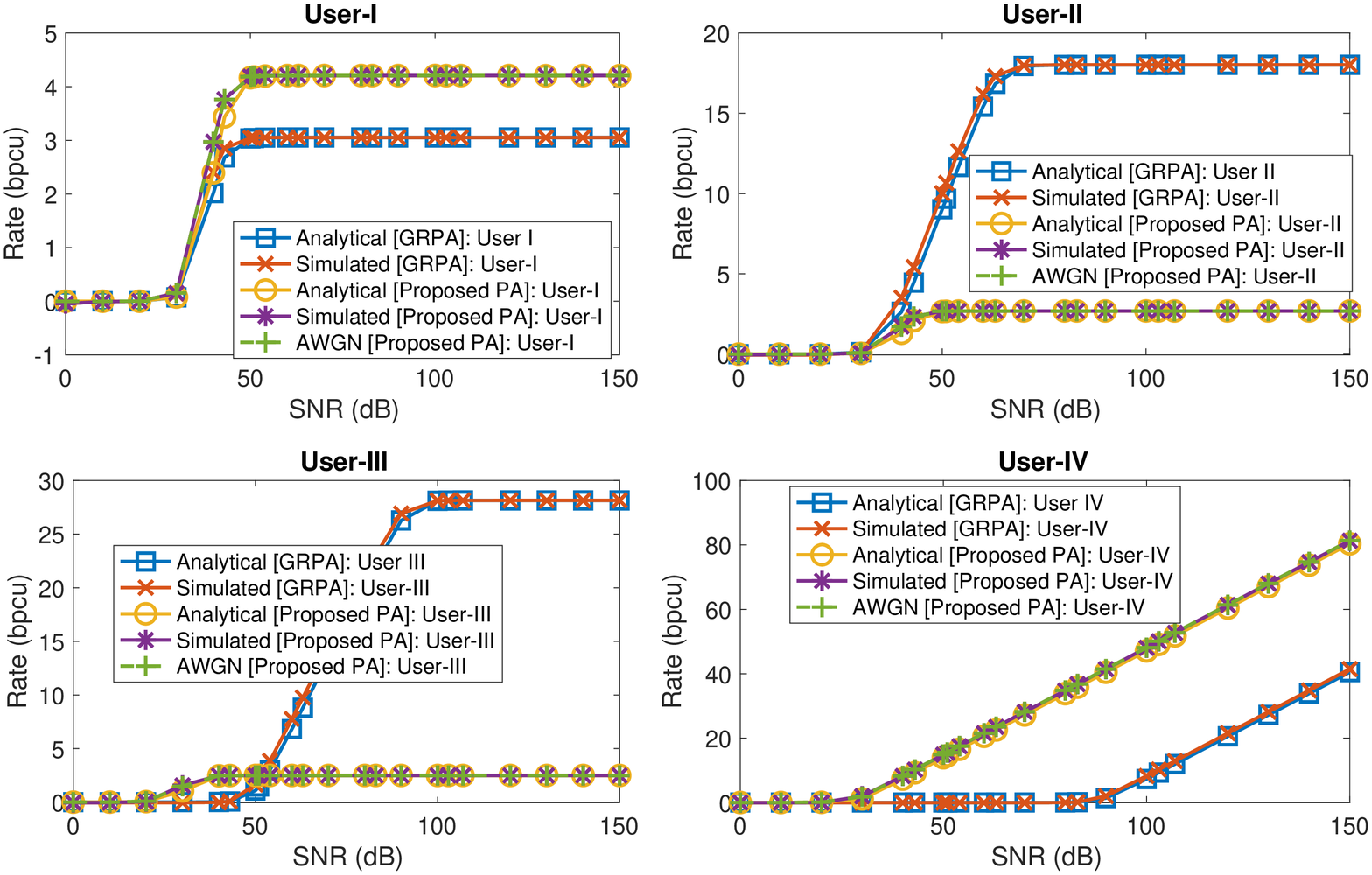}\\
  \caption{Validation of derived rate formula for $\beta=\frac{\alpha}{3},\nu=10,\alpha=2$ for static users for $\Theta_{\frac{1}{2}}=50^{\circ}$.}\label{fig_2}
\end{figure}
\begin{figure}
  \centering
  % Requires \usepackage{graphicx}
  \includegraphics[width=\linewidth,height=13cm]{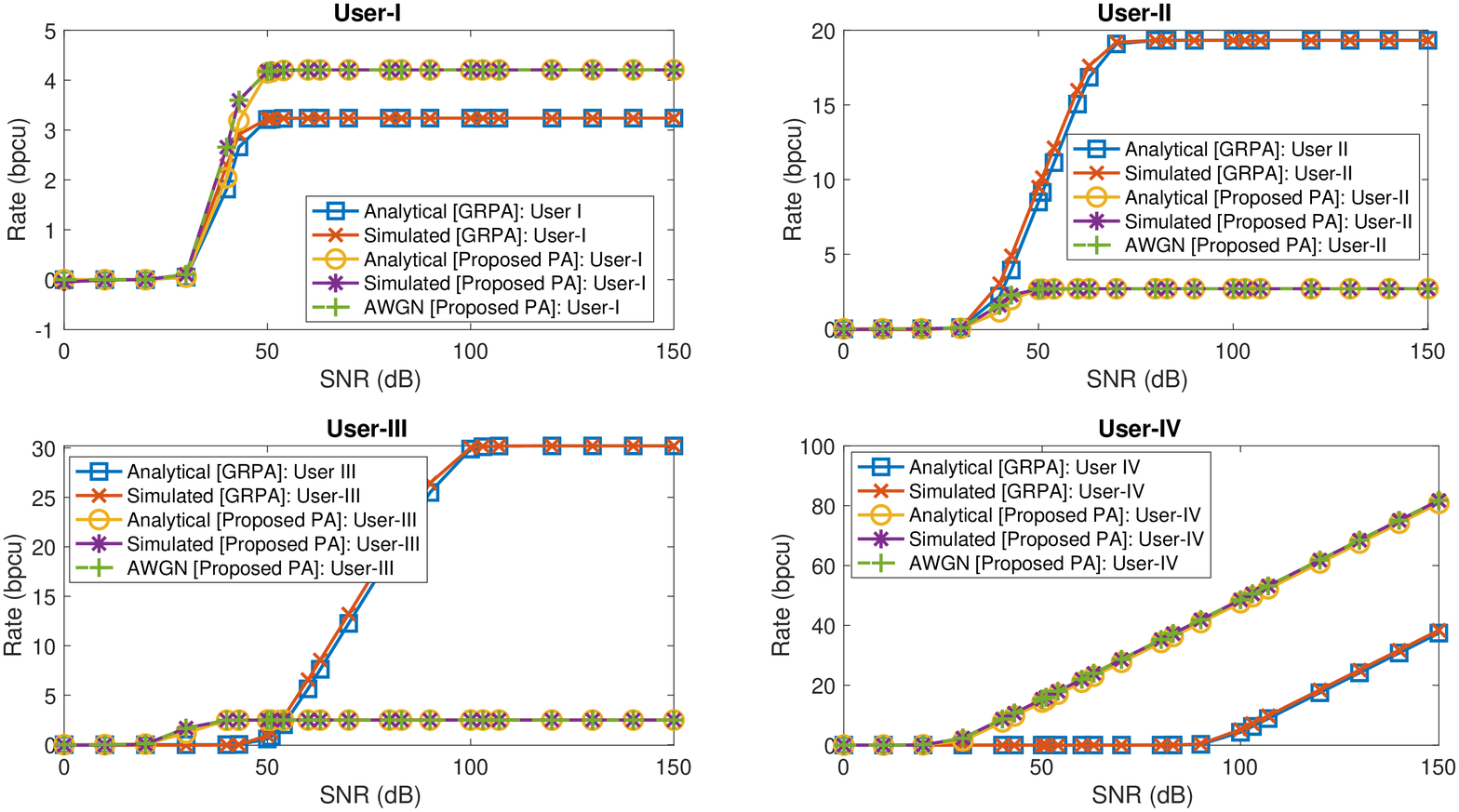}\\
  \caption{Validation of derived rate formula for $\beta=\frac{\alpha}{3},\nu=10,\alpha=2$ for static users for $\Theta_{\frac{1}{2}}=60^{\circ}$.}\label{fig_3}
\end{figure}
%\begin{figure}
 % \centering
  % Requires \usepackage{graphicx}
  %\includegraphics[width=\linewidth,height=5.5cm]{figgs4.eps}\\
  %\caption{BER performance comparsion for perfect CSI, outdated CSI, and refined CSI for $\beta=\frac{\alpha}{3},\nu=10,\alpha=2$.}\label{fig_2}
%\end{figure}
%\begin{figure}
 % \centering
  % Requires \usepackage{graphicx}
 % \includegraphics[width=\linewidth,height=16cm]{40_3_1.eps}\\
 % \caption{Validation of derived rate formula in presence of mobility for $\beta=\frac{\alpha}{3},\nu=10,\alpha=2$, $h_{\min}=1$, $h_{\max}=3$, $\Theta_{\frac{1}{2}}=40^{\circ}$.}\label{fig_4}
%\end{figure}
%\begin{figure}
 % \centering
  % Requires \usepackage{graphicx}
 % \includegraphics[width=\linewidth,height=16cm]{40_1_6.eps}\\
 % \caption{Validation of derived rate formula in presence of mobility for $\beta=\frac{\alpha}{3},\nu=10,\alpha=2$, $h_{\min}=1$, $h_{\max}=6$, $\Theta_{\frac{1}{2}}=40^{\circ}$.}\label{fig_5}
%\end{figure}
\begin{figure}
  \centering
  % Requires \usepackage{graphicx}
  \includegraphics[width=\linewidth,height=16cm]{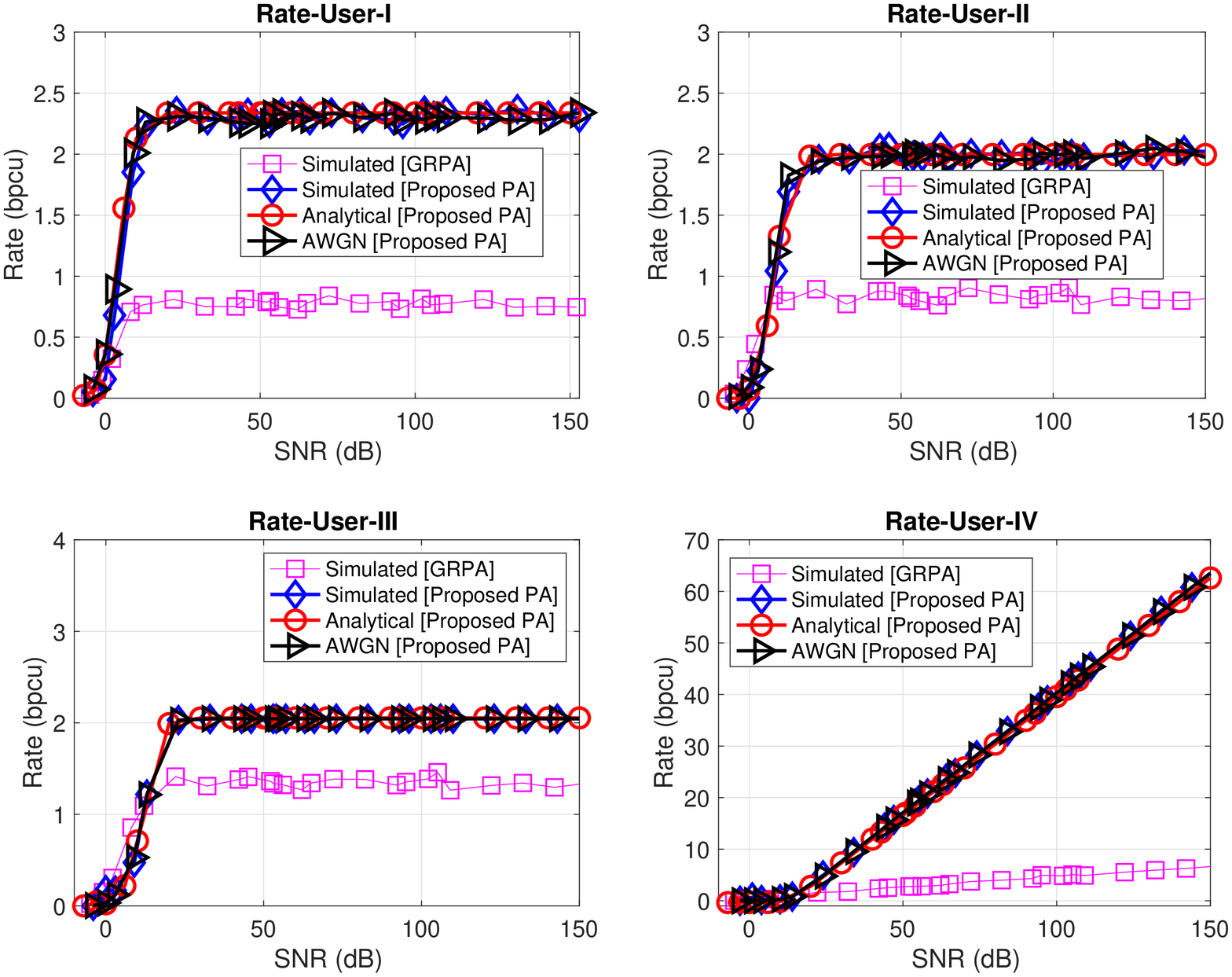}\\
  \caption{Validation of derived rate formula in presence of mobility for $\beta=\frac{\alpha}{3},\nu=10,\alpha=2$, $h_{\min}=1$, $h_{\max}=3$, $\Theta_{\frac{1}{2}}=50^{\circ}$.}\label{fig_6}
\end{figure}
\begin{figure}
  \centering
  % Requires \usepackage{graphicx}
  \includegraphics[width=\linewidth,height=16cm]{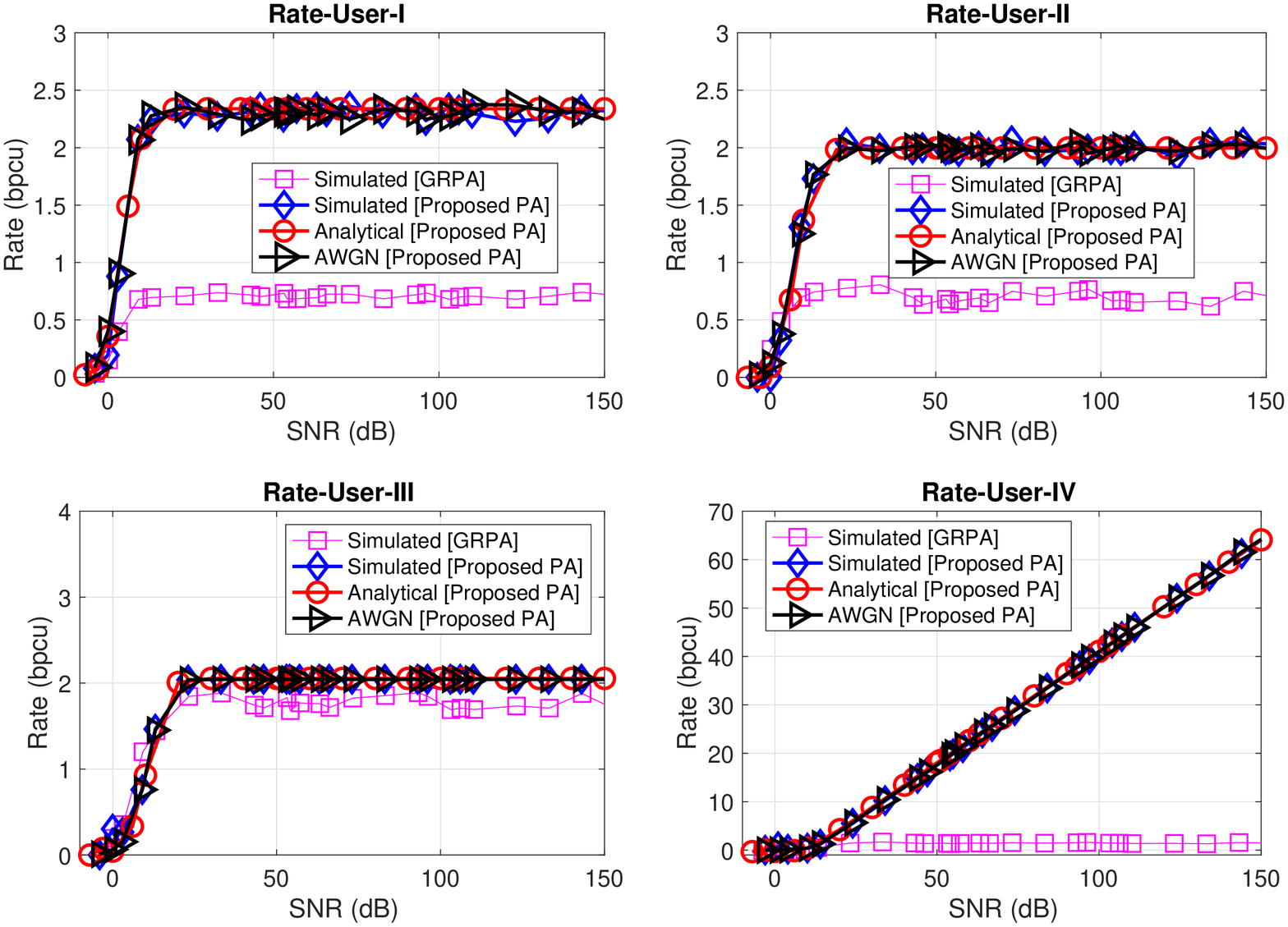}\\
  \caption{Validation of derived rate formula in presence of mobility for $\beta=\frac{\alpha}{3},\nu=10,\alpha=2$, $h_{\min}=1$, $h_{\max}=5$, $\Theta_{\frac{1}{2}}=50^{\circ}$.}\label{fig_7}
\end{figure}
\begin{figure}
  \centering
  % Requires \usepackage{graphicx}
  \includegraphics[width=\linewidth,height=15cm]{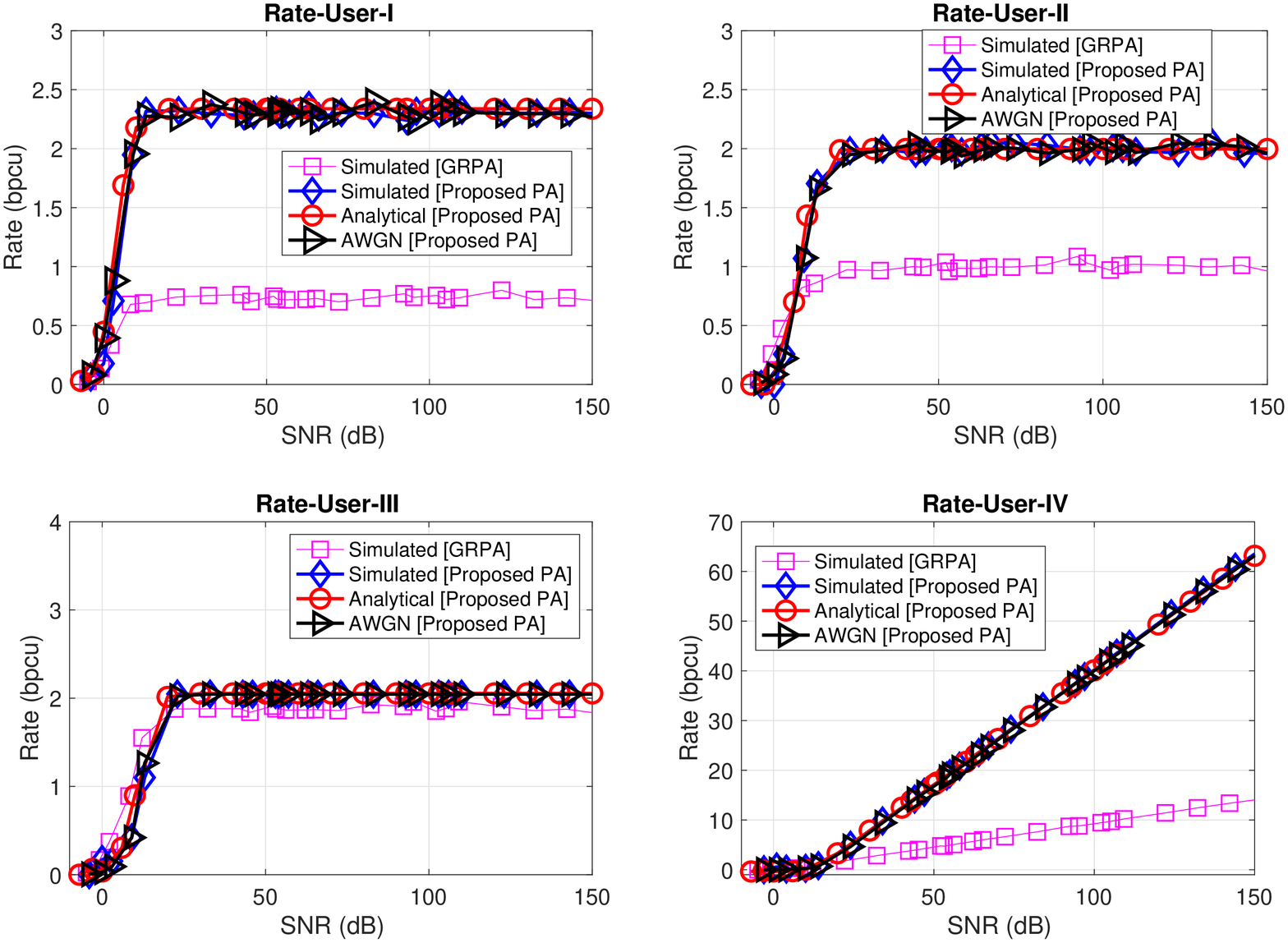}\\
  \caption{Validation of derived rate formula in presence of mobility for $\beta=\frac{\alpha}{3},\nu=10,\alpha=2$, $h_{\min}=1$, $h_{\max}=3$, $\Theta_{\frac{1}{2}}=60^{\circ}$.}\label{fig_8}
\end{figure}
\begin{figure}
  \centering
  % Requires \usepackage{graphicx}
  \includegraphics[width=\linewidth,height=15cm]{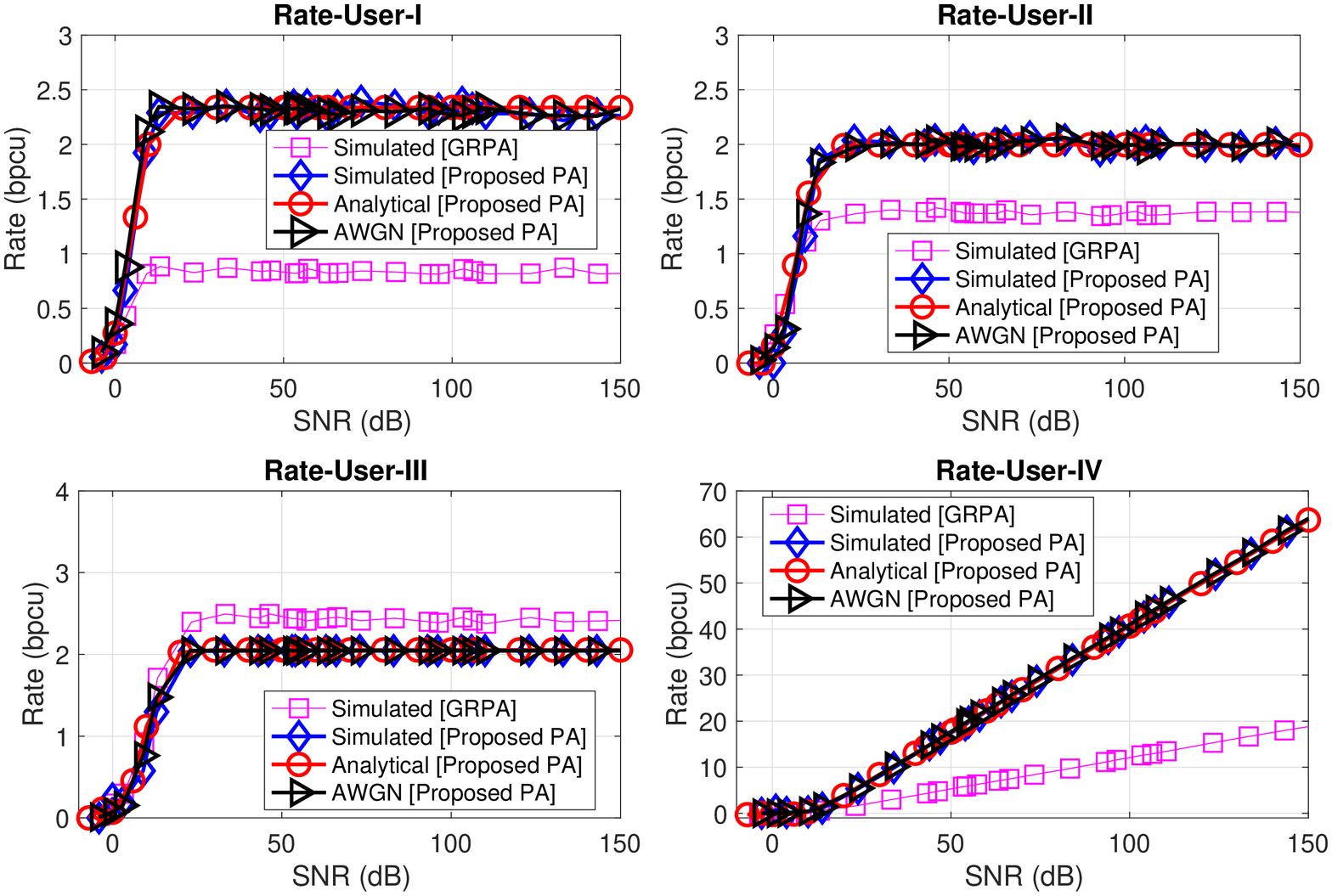}\\
  \caption{Validation of derived rate formula in presence of mobility for $\beta=\frac{\alpha}{3},\nu=10,\alpha=2$, $h_{\min}=1$, $h_{\max}=5$, $\Theta_{\frac{1}{2}}=60^{\circ}$.}\label{fig_9}
\end{figure}
\begin{figure}
  \centering
  % Requires \usepackage{graphicx}
  \includegraphics[width=15cm,height=10cm]{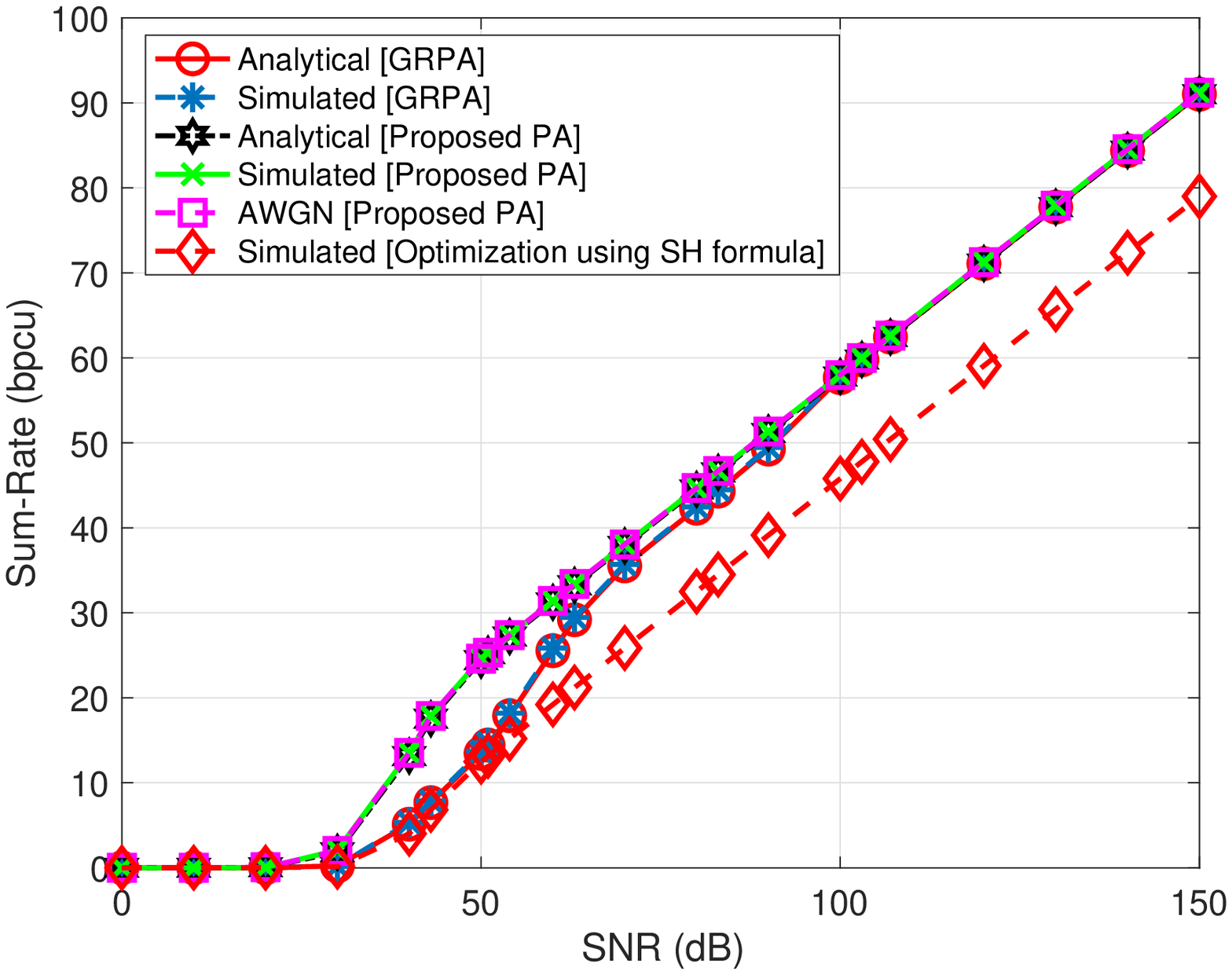}\\
  \caption{Validation of sum-rate for $\beta=\frac{\alpha}{3},\nu=10,\alpha=2$ for static users for $\Theta_{\frac{1}{2}}=50^{\circ}$.}\label{fig_10}
\end{figure}
\begin{figure}
  \centering
  % Requires \usepackage{graphicx}
  \includegraphics[width=15cm,height=10cm]{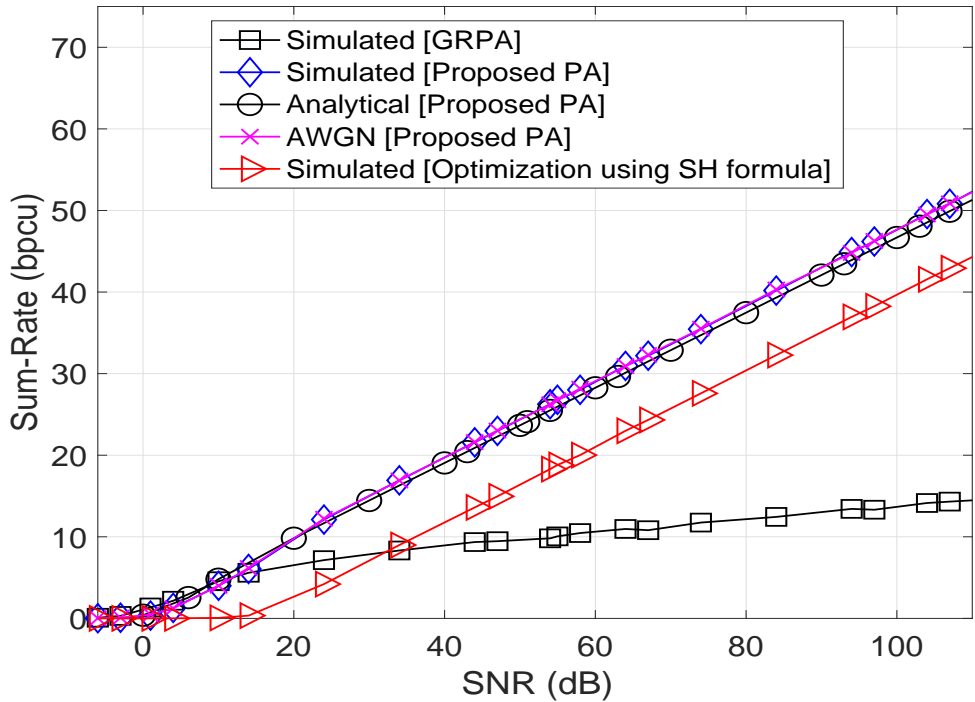}\\
  \caption{Validation of derived rate formula for $\beta=\frac{\alpha}{3},\nu=10,\alpha=2$ for scenario with user-mobility for $\Theta_{\frac{1}{2}}=50^{\circ}$.}\label{fig_11}
\end{figure}

\end{document}